\documentclass[11pt,english]{article}

\usepackage[letterpaper, left=1in, right=1in, top=0.9in, bottom=0.9in]{geometry}
\usepackage[american]{babel}
\usepackage[normalem]{ulem}
\usepackage{amsmath, amssymb, cases, amsthm}
\usepackage{thmtools}
\usepackage[shortlabels]{enumitem}
\usepackage{mdframed}
\usepackage{bbm}
\usepackage{bm}
\usepackage{microtype}
\usepackage{xcolor}
\usepackage{makecell}
\usepackage{mathtools}
\usepackage{algorithmic}
\usepackage[procnumbered,ruled,vlined]{algorithm2e}
\usepackage{float}
\usepackage{varwidth}
\usepackage{tcolorbox}
\newtcolorbox{construction}[2][]
{
	colframe = gray!50,
	colback  = gray!10,
	coltitle = gray!10!black,
	left*=0mm, 
	before skip = 10pt,
	after skip = 10pt,
	title    = \textbf{\space\space #2},
	#1,
}

\SetKwInput{KwData}{Input}
\SetKwInput{KwResult}{Output}
\SetKwInput{KwGlobalVar}{Global variables}

\usepackage{nameref}
\definecolor{ForestGreen}{rgb}{0.1333,0.5451,0.1333}
\definecolor{DarkRed}{rgb}{0.8,0,0}
\definecolor{Red}{rgb}{1,0,0}
\usepackage[linktocpage=true,
pagebackref=true,colorlinks,
linkcolor=DarkRed,citecolor=ForestGreen,
bookmarks,bookmarksopen,bookmarksnumbered]
{hyperref}

\usepackage[capitalize,nosort,nameinlink]{cleveref}

\declaretheorem[numberwithin=section,refname={Theorem,Theorems},Refname={Theorem,Theorems}]{theorem}
\declaretheorem[numberlike=theorem,name=Theorem,refname={Theorem,Theorems},Refname={Theorem,Theorems}]{thm}

\declaretheorem[numberlike=theorem,name=Lemma,refname={Lemma,Lemmas},Refname={Lemma,Lemmas}]{lem}

\declaretheorem[numberlike=theorem,name=Proposition,refname={Proposition,Propositions},Refname={Proposition,Propositions}]{prop}

\declaretheorem[numberlike=theorem,style=definition]{conjecture}

\declaretheorem[numberlike=theorem,refname={Fact,Facts},Refname={Fact,Facts},name={Fact}]{fact}

\declaretheorem[numberlike=theorem,style=remark,refname={Example,Examples},Refname={Example,Examples},name={Example}]{example}

\def\final{0}  
\ifnum\final=0  
\newcommand{\todo}[1]{{\color{red}[{\tiny TODO: \bf #1}]\marginpar{\color{red}*}}}
\else 
\newcommand{\todo}[1]{}
\fi

\newcommand{\dist}{\mathrm{dist}}
\global\long\def\mincut{\mathrm{mincut}}

\begin{document}
\global\long\def\supp{\mathrm{supp}}

\title{Expander Decomposition with Almost Optimal Overhead}
\author{Nikhil Bansal\thanks{University of Michigan,         \texttt{bansaln@umich.edu}. Supported by NSF awards CCF-2327011 and CCF-2504995.} \and Arun Jambulapati\thanks{Independent, \texttt{jmblpati@gmail.com}. Supported by the NWO VICI grant 639.023.812 awarded to NB.}  \and Thatchaphol Saranurak\thanks{
        University of Michigan,         \texttt{thsa@umich.edu}.
        Supported by NSF Grant CCF-2238138 and a Sloan Fellowship.}
}

\maketitle

\pagenumbering{gobble}
\begin{abstract}
We present the first polynomial-time algorithm for computing a near-optimal \emph{flow}-expander decomposition. Given a graph $G$ and a parameter $\phi$, our algorithm removes at most a $\phi\log^{1+o(1)}n$ fraction of edges so that every remaining connected component is a $\phi$-\emph{flow}-expander (a stronger guarantee than being a $\phi$-\emph{cut}-expander). This achieves overhead $\log^{1+o(1)}n$, nearly matching the $\Omega(\log n)$ graph-theoretic lower bound that already holds for cut-expander decompositions, up to a $\log^{o(1)}n$ factor. Prior polynomial-time algorithms required removing $O(\phi\log^{1.5}n)$ and $O(\phi\log^{2}n)$ fractions of edges to guarantee $\phi$-cut-expander and $\phi$-flow-expander components, respectively.
\end{abstract}

\clearpage
\tableofcontents
\clearpage
\pagenumbering{arabic}

\section{Introduction}

Expander decomposition is a central structural primitive in graph algorithms --- where one removes a small fraction of edges from a given graph $G$, so that every connected component in the resulting graph has some desired expansion. This notion first appeared implicitly in property testing \cite{goldreich1998sublinear} and was defined explicitly in \cite{kannan2004clusterings}.

As many problems can be solved better on expanders,
over the last two decades, expander decomposition has been used widely in approximation algorithms \cite{chekuri2005multicommodity,chuzhoy2012polylogarithmic,chekuri2016polynomial}, fast graph algorithms \cite{spielman2004nearlylinear,kelner2014almost,saranurak2019expander,chuzhoy2020deterministic,chen2025maximum}, dynamic data structures \cite{patrascu2007planning,nanongkai2017dynamic,bernstein2020deterministic,goranci2021expander,goranci2023fully,jin2024fully,van2024almost}, parallel and distributed algorithms \cite{chang2021near,chang2020deterministic,haeupler2022hop,chen2025parallel,haeupler2025parallel}, and streaming algorithms \cite{chen2022weighted,filtser2023expander,chen2024streaming}. A lot of these works have also focussed on
speeding up constructions in various models of computation, or extending the notion of expansion in various interesting ways to obtain new applications. 

\paragraph{Edges removed vs.~Expansion.}
A basic and natural question is to understand how many edges must be removed to achieve a desired expansion $\phi$. Equivalently, given a budget on the number of edges that can be removed, what is the best achievable expansion.

Remarkably, this question is completely understood {\em existentially} --- for any graph on $m$ edges, removing $O(\phi m \log n)$ edges always suffices. Moreover, removing $\Omega(\phi m \log n)$ edges is necessary in general.
However, when restricted to {\em polynomial time} constructions of expander decompositions (which are necessary for most applications), all known results incur an additional polylogarithmic factor loss over this existential bound.

\smallskip

In this work, we give the {\em first} polynomial-time  cut-expander and flow-expander decompositions with only an $\log^{o(1)}n$ factor extra loss.
Before describing our results formally, we review the relevant definitions, describe the cut-and-recurse framework that achieves the existential bound, and explain why current polynomial time algorithms incur additional logarithmic losses.



\paragraph*{Cut and Flow Expander Decomposition.}
Let $G=(V,E)$ be an undirected graph with $n$ vertices and $m$ edges. We say that $G$ is a \emph{$\phi$-cut-expander} if for every subset $S\subseteq V$, we have
\[
  |E(S,V\setminus S)|\ge\phi\min\{\deg_{G}(S),\deg_{G}(V\setminus S)\},
\]
where $\deg_{G}(S)=\sum_{u\in S}\deg(u)$. 

\smallskip

Next, $G$ is a \emph{$\phi$-flow-expander} if every multi-commodity demand respecting $\deg_{G}$ can be routed in $G$ with congestion at most $1/\phi$. See \Cref{sec:prelim} for detailed definitions.

\smallskip

A \emph{$\phi$-(cut/flow)-expander decomposition} is specified by an edge set $C\subseteq E$ such that each connected component of $G-C$ is a $\phi$-(cut/flow)-expander.
We say that a $\phi$-(cut/flow)-expander decomposition has {\em overhead} $\gamma$
if the number of removed edges $|C|\le \gamma \phi m$. The overhead $\gamma$ is the main quality measure.

Flow expansion is stronger requirement than cut expansion: every $\phi$-flow-expander is a $\phi$-cut-expander, while every $\phi$-cut-expander is only guaranteed to be an $\Omega(\phi/\log n)$-flow expander by the well-known \emph{flow--cut gap} \cite{leighton1999multicommodity}. Many flow-based applications---including all-or-nothing flow \cite{chekuri2005multicommodity}, edge-disjoint paths \cite{chuzhoy2012polylogarithmic}, flow vertex sparsifiers \cite{chuzhoy2012vertex}, and oblivious routing \cite{racke2002minimizing}---therefore require decompositions with \emph{flow} expansion guarantees.

\paragraph{Benchmark: $\Omega(\log n)$ overhead is unavoidable.}
As mentioned above, already for \emph{cut}-expander decompositions, there is an inherent $\Omega(\log n)$ lower bound on the overhead. In particular, for the $n$-vertex hypercube, standard isoperimetric bounds imply that if $|C|\le m/2$, then some connected component of $G-C$ is not a $\Omega(1/\log n)$-cut expander \cite{alev2018graph}. Equivalently, setting $\phi=\Theta(1/\log n)$, any $\phi$-cut-expander decomposition must have overhead $\gamma=\Omega(\log n)$. Since flow expansion implies cut expansion, the same lower bound applies to flow-expander decompositions as well. Thus, $\Theta(\log n)$ is the best possible target for $\gamma$.

\paragraph{Cut-and-recurse and polynomial-time limitations.}
The most basic expander
decomposition procedure is to \emph{cut and recurse}: if $G$ is already a $\phi$-cut-expander, return $C=\emptyset$. Otherwise, there exists some \emph{$\phi$-sparse} cut $S$ satisfying
\[
  |E(S,V\setminus S)|\le\phi\min\{\deg_{G}(S),\deg_{G}(V\setminus S)\},
\]
and we recurse on $G[S]$ and $G[V\setminus S]$, returning $E(S,V\setminus S)$ and  the cuts produced recursively. 

\smallskip

So assuming that we can solve  $\phi$-sparse cut exactly, the above procedure,  together with a standard charging argument,\footnote{For each cut $(S,V\setminus S)$ charge $|E(S,V\setminus S)|$ to vertices on the smaller side so that each such vertex $v$ is charged at most $\phi\deg(v)$. Each vertex can lie on the smaller side at most $\log n$ times, yielding total charge at most $2\phi(\log n)\,m$ and hence $\gamma=O(\log n)$.}
gives overhead $\gamma=O(\log n)$.
Together with the hypercube lower bound above, this overhead is graph-theoretically optimal (for appropriate $\phi$).

\smallskip

In polynomial time however, one cannot find such sparse cuts exactly. In particular, when $G$ is not a $\phi$-cut-expander, using the best known $O(\sqrt{\log n})$-approximation to sparsest cut \cite{arora2009expander}, we can only find an $O(\phi\sqrt{\log n})$-sparse cut. For flow expansion, the obstruction is even more fundamental: when $G$ is not a $\phi$-flow-expander, there may not even exist any $o(\phi\log n)$-sparse cut due to the flow--cut gap; and thus we can only find an $O(\phi\log n)$-sparse cut \cite{leighton1999multicommodity}. Plugging these extra losses into cut-and-recurse procedure yields, for any $\phi$,
\begin{itemize}
  \item a $\phi$-cut-expander decomposition with overhead $O(\log^{1.5}n)$ in polynomial time, and
  \item a $\phi$-flow-expander decomposition with overhead $O(\log^{2}n)$ in polynomial time.
\end{itemize}

\paragraph{State of the art: the long-standing gap.}
Surprisingly, despite extensive research on expander decomposition and its extensions, no polynomial-time algorithm is known that  improves upon the overhead bounds described above. In particular, it has remained open for over two decades whether one can match the $\Omega(\log n)$ existential benchmark in polynomial time, and more strongly, whether one can do so for \emph{flow}-expansion.

\paragraph{Our result.} Even though improving upon the $O(\sqrt{\log n})$-approximation to sparsest cut is a major open problem, and the $O(\log n)$ flow-cut gap loss may seem inherently unavoidable, we are able to circumvent these natural barriers and match the $\Omega(\log n)$ benchmark up to a $\log^{o(1)}(n)$ factor. This holds even for flow expanders. In particular, we show the following.
\begin{thm}\label{thm:main intro}
  There is a polynomial-time algorithm that, given an undirected graph $G=(V,E)$ with $n$ vertices and $m$ edges and parameter $\phi$, returns an edge set $C\subseteq E$ such that each connected component of $G-C$ is a $\phi$-flow-expander and $|C|\le\phi\gamma m$, where $\gamma=O(\log(n)\exp(\sqrt{\log\log n}))$.
\end{thm}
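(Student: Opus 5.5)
Our plan is to replace ``cut and recurse'' by a recursion in which each step either certifies that the current cluster is a $\phi$-flow-expander or removes a cut, and in which the charging is no longer per cut but per \emph{level of a hierarchy}. The loss then comes only from a multiplicatively small number of levels rather than from the approximation factor of each cut. The baseline tool is the fractional/LP (multicommodity-flow) certificate: for a cluster $H$, solve the concurrent-flow LP routing the $\deg$-respecting demands. If congestion $\le 1/\phi$, $H$ is a $\phi$-flow-expander and we stop. Otherwise the dual gives a length function $\ell$ on edges with $\sum_e \ell_e \le \phi \cdot \mathrm{vol}$-weighted average distance. We do not round $\ell$ to a single sparse cut, which would lose $\log n$. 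Instead we use it to cut a whole \emph{ball-growing partition} at once, charging the cut edges against $\ell$-volume.

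Concretely, the key steps are as follows. (1) Use region growing (\`a la Leighton--Rao / Garg--Vazirani--Yannakakis) with radius parameter $r$ to partition $H$ into pieces of $\ell$-diameter $\le r$, cutting $O(\phi\,\mathrm{vol}(H)\log(\cdot)/r)$ edges. The logarithm here is $\log$ of the ratio of volumes at successive scales, not $\log n$. (2) Observe that after cutting, no piece of small diameter can support the large $\ell$-average distance that the dual certifies, so a constant fraction of the ``violating'' demand is separated. (3) Organize the recursion by scales: use a hierarchy with $k=\exp(\sqrt{\log\log n})$-ish branching in the log-volume scale. Every vertex then participates in $O(\log n/\log k')$ levels, each charging $O(\phi\log k')$ times its degree. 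This telescopes as in the footnoted charging argument, with the region-growing logs summing to $O(\log n)$ across levels (the telescoping $\sum \log(\mathrm{vol}_{i+1}/\mathrm{vol}_i)$).

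(4) The remaining loss is that a cluster produced at one level is not yet a flow-expander, and re-solving the LP on it may again expose violations. We bound the number of re-tries per level by a potential (e.g.\ the LP dual objective or an entropy-type potential over volumes), losing a factor that we balance against the number of levels. Choosing the trade-off parameter to balance $L$ levels against a $2^{\log n/L}$-type per-level loss yields $\gamma=O(\log n\cdot \exp(\sqrt{\log\log n}))$. (5) Finally, since each terminal cluster passed the LP test directly, each is a $\phi$-flow-expander, and the running time is polynomial since every step is an LP plus region growing.

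The main obstacle is step (4): ensuring that the dual lengths computed in a parent cluster remain useful after restriction to children, so that the region-growing charges telescope rather than restart with a fresh $\log n$ at every level. Our first attempt would be to keep a single global length function, updated multiplicatively (multiplicative-weights style) and reused across recursion levels. We would then prove that the total $\ell$-volume increases by at most a $\exp(O(\sqrt{\log\log n}))$ factor per doubling of scale, which is exactly where the subpolynomial overhead would come from.
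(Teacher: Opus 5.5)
\textbf{There is a genuine gap, and it sits at what you call the main obstacle.} Nothing in steps (3)--(4) makes the per-level charges telescope. The logarithm in a region-growing bound is a ratio of $\ell$-volumes for the \emph{current} dual solution. A Seymour-type sum $\sum_i\log(\mathrm{vol}_{i+1}/\mathrm{vol}_i)$ collapses only if one fixed metric is used at every level.

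In expander decomposition, however, each child must re-solve its own concurrent-flow LP, since only that certifies or refutes that the child is a flow-expander. The new dual is renormalized to $\sum_e\ell_e\le1$ and can be essentially unrelated to the restriction of the parent's $\ell$. Indeed, your pieces are by construction regions on which $\ell$ is \emph{not} spreading, so the parent's lengths say little about them.

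This leaves several steps unsupported:
\begin{itemize}
\item The ``single global, multiplicatively updated length function'' and the claim that $\ell$-volume grows by only $\exp(O(\sqrt{\log\log n}))$ per scale restate what must be proved; no argument is given for them.
\item The re-try potential of step (4) is likewise a placeholder. Taken literally, balancing $L$ levels against a $2^{\log n/L}$ per-level loss gives a $2^{\Theta(\sqrt{\log n})}$ factor, not $\exp(\sqrt{\log\log n})$.
\item Step (2) aims at the wrong kind of progress. The recursion needs pieces that are lighter in \emph{true} mass, by an amount matched to how many pieces you create. Small $\ell$-diameter alone does not give that, since a single ball of radius $1/(4\phi|A|)$ may contain half of $|A|$.
\end{itemize}

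\textbf{The missing idea} is to telescope with a quantity that is invariant under recursion: the fixed mass $A=\deg_G$. It is restricted, not recomputed, on subgraphs, and is tested via the product demand $D_A$ at the cost of a factor $2$. The potential is $\phi\beta|A|\log|A|$. One then designs a single clustering step whose boundary cost is $\log(|A|/a)$ while every cluster has true mass at most about $a$. With a fresh LP metric at each call, the paper proceeds as follows.
\begin{itemize}
\item Each vertex picks the first radius $\Delta_i=1/(4\phi|A|8^i)$ at which $\log(|A|/A(B(x,\Delta_i)))$ grows by at most a factor $\gamma$ over the previous scale. It also picks a mass bucket among $a_j=|A|/2^{\gamma^j}$.
\item Voting over the $L^2$ pairs gives a class of mass at least $|A|/L^2$.
\item A maximal net $N$ of disjoint $\Delta_{i^*}$-balls in that class has $|N|\le|A|/a_{j^*}$.
\item The sparse neighborhood cover costs $O(\log|T|)\sum_e\ell_e/R$, where $|T|$ is the number of terminals rather than an $\ell$-volume ratio. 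It therefore cuts $O(\phi|A|8^L\gamma^2\log(|A|/a_{j^*-2}))$ edges.
\item The stabilization inequality bounds every cluster's mass by $a_{j^*-2}$.
\end{itemize}
The two logarithms add up to $\log|A|$, and the $1/L^2$ coverage costs a factor $L^2$. The overhead $8^LL^2\gamma^2$ with $\gamma^L\ge\log|A|$ is where $\exp(O(\sqrt{\log\log n}))$ comes from.

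The case where some $\Delta_0$-ball already holds half the mass is not handled by your plan. The paper treats it separately with a Leighton--Rao sweep cut around that ball. That cut is $12\phi$-sparse, so it fits into ordinary cut-and-recurse.
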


\Cref{thm:main intro} yields the first polynomial-time expander decomposition construction with optimal overhead up to a $\log^{o(1)}n$ factor. It improves the state-of-the-art by $\log^{0.5-o(1)}n$ and $\log^{1-o(1)}n$ factors for cut-expander decompositions and flow-expander decompositions, respectively. The result also extends to capacitated graphs, the terminal version, and more general node-weighting versions; see \Cref{thm:main body} for the theorem in full generality.

\smallskip

At a high level, we adapt the recent approach of graph clustering on top of a spreading-metric LP solution in \cite{bansal2024approximating}---developed to obtain improved approximations for cutwidth---to the expander decomposition setting, enabling a strictly better structural tradeoff than cut-and-recurse in polynomial time.  A high-level intuition  of the algorithm appears in \Cref{sec:ed:overview}.

\section{Preliminaries}

\label{sec:prelim}
All logs are base $2$.
We work with an undirected graph $G=(V,E)$ where edges have unit capacity.
For a vertex set $S\subseteq V$, let $\delta_{G}(S) := E(S,V\setminus S)$ be the set of edges crossing the cut $(S,V\setminus S)$.

\paragraph{Node-weighting.} A \emph{node-weighting} is a function $A:V\to\mathbb{R}_{\ge0}$. For $S\subseteq V$ we use the shorthand
\[
  A(S):=\sum_{v\in S}A(v),\qquad |A|:=A(V)=\sum_{v\in V}A(v).
\]
We also use the restriction $A_{S}:V\to\mathbb{R}_{\ge0}$ defined by $A_{S}(v)=A(v)$ if $v\in S$ and $A_{S}(v)=0$ otherwise, so that $|A_{S}|=A(S)$. The degree node-weighting is $\deg_{G}:V\to\mathbb{Z}_{\ge0}$ where $\deg_{G}(v)$ is the degree of $v$ in $G$ and $\deg_{G}(S)=\sum_{v\in S}\deg_{G}(v)$.

\paragraph{Multi-commodity Demands.}
A \emph{demand} is a function $D:\binom{V}{2}\to\mathbb{R}_{\ge0}$, where $D(u,v)$ specifies how much flow must be sent between the (unordered) pair $\{u,v\}$. We say that $D$ \emph{respects} $A$ (or is \emph{$A$-respecting}) if for every $x\in V$,
\[
  \sum_{y\in V\setminus\{x\}} D(x,y) \le A(x).
\]
We say that a demand $D$ is \emph{routable in $G$ with congestion $\rho$} if there exists a feasible multi-commodity flow that routes each pair-demand $D(u,v)$ in $G$ such that the total flow on every edge $e\in E$ is at most $\rho$ (recall that edge capacities are $1$).
Given $A$, the \emph{$A$-product demand} is the demand $D_{A}$ defined by
\[
  D_{A}(u,v):=\frac{A(u)A(v)}{|A|}\qquad \text{for all } \{u,v\}\in\binom{V}{2}.
\]
Note that $D_{A}$ respects $A$, as $\sum_{v} D_A(u,v) \leq A(u)$ for each $u$.
\paragraph{Cut Expansion.}
We recall two notions of expansion with respect to a node-weighting $A$ and parameter $\phi>0$. The weighting $A$ is \emph{$\phi$-cut-expanding in $G$} if for every nontrivial $\emptyset\subsetneq S\subsetneq V$,
\[
  |\delta_{G}(S)| \ge \phi\,\min\{A(S),A(V\setminus S)\}.
\]
A set $S$ is a \emph{$\phi$-sparse cut (with respect to $A$)} if it violates the above inequality.

\paragraph{Flow Expansion.}
The weighting $A$ is \emph{$\phi$-flow-expanding in $G$} if every $A$-respecting demand is routable in $G$ with congestion at most $1/\phi$.
Fact \ref{fact:product-demand} below shows that, upto a factor of $2$, $A$ is $\phi$-flow-expanding in $G$ iff the $A$-product demand $D_A$ is routable in $G$ with congestion $1/\phi$. That is, $D_A$ is the ``hardest'' $A$-respecting demand. (The proof is given in \Cref{sec:omit} for completeness.)
\begin{fact}\label{fact:product-demand}
  If the $A$-product demand $D_{A}$ is routable in $G$ with congestion $1/\phi$, then $A$ is $(\phi/2)$-flow-expanding in $G$.
\end{fact}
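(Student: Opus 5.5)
Given an arbitrary $A$-respecting demand $D$, the plan is to route it through intermediate vertices using two copies of the product demand, each scaled down by a factor of two. Concretely, send each pair-demand $D(x,y)$ from $x$ to $y$ via a "random" intermediate vertex $z$ chosen with probability $A(z)/|A|$. So $x$ sends $D(x,y)A(z)/|A|$ to $z$, and $z$ forwards the same amount to $y$. Since congestion is additive over demands, it suffices to show that the aggregated two-hop demand is dominated by $D_A$, entrywise per leg, and then route it with the given flow for $D_A$.

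The first step is to bound the first leg. Aggregated over all $y$, the demand from $x$ to $z$ is
\[
\sum_{y} D(x,y)\,\frac{A(z)}{|A|} \le \frac{A(x)A(z)}{|A|} = D_A(x,z),
\]
using that $D$ respects $A$. The second leg works the same way: the demand from $z$ to $y$ is $\sum_x D(x,y)A(z)/|A| \le A(y)A(z)/|A| = D_A(z,y)$. Since demands are unordered pairs, a given pair $\{u,v\}$ can receive first-leg contributions in both orientations. The clean way to handle this is to treat the two legs as separate demands $D^{(1)}$ and $D^{(2)}$. For a fixed pair $\{u,v\}$, the contribution to $D^{(1)}(u,v)$ with $u$ as source is at most $A(u)A(v)/|A|$, and likewise with $v$ as source. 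I would absorb this by orienting the bookkeeping so that each unordered pair $\{x,y\}$ of $D$ is routed from one chosen endpoint. Then each leg is at most $D_A$ per unordered pair, where $D_A$ counts the pair once and the respect condition sums over each $x$ as one endpoint. If a factor is lost here, I would note that the intermediate $z=x$ or $z=y$ terms are trivial and need no routing.

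Given that domination, route each leg by scaling the $D_A$-flow, which has congestion $1/\phi$. Concatenating the legs gives a routing of $D$ with congestion at most $2/\phi$, i.e. $A$ is $(\phi/2)$-flow-expanding. The only delicate point I expect is the ordered-versus-unordered bookkeeping just described. Since $D_A(u,v)$ already equals the full product $A(u)A(v)/|A|$ for each unordered pair, the per-endpoint respect bound should make each leg fit under $D_A$ without an extra factor.
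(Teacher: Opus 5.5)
Your routing is the same as the paper's: send $D(x,y)A(z)/|A|$ along $x\to z\to y$, then compose with the congestion-$1/\phi$ flow for $D_A$. However, the step you flag as delicate is resolved incorrectly, and as written your argument only certifies congestion $4/\phi$.

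Suppose each pair has one chosen source. Let $\mathrm{out}(u)$ and $\mathrm{in}(u)$ be the total demand of pairs for which $u$ is the source, respectively the sink. The first-leg demand on $\{u,v\}$ is then $\frac{A(v)}{|A|}\mathrm{out}(u)+\frac{A(u)}{|A|}\mathrm{out}(v)$. The respect condition only gives $\mathrm{out}(u)\le A(u)$, so each leg's demand on $\{u,v\}$ is bounded by $2D_A(u,v)$, not $D_A(u,v)$.

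This is not an artifact of a bad orientation. Take $A\equiv 1$ on four vertices with $D(a,b)=D(c,d)=1$. Every orientation has exactly two sources $p,q$, one in each pair, and the first leg puts $\frac14+\frac14=2D_A(p,q)$ on $\{p,q\}$. The same happens for the two sinks on the second leg. Discarding the trivial terms $z=x$ or $z=y$ does not change this. So routing the two legs separately, each against the bound $2D_A$, loses a further factor of $2$.

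The repair is to stop splitting by legs. A pair $\{x,y\}$ routed via $z$ uses $\{x,z\}$ and $\{z,y\}$ whatever the orientation. Hence the combined load on $\{u,v\}$ is $\frac{A(v)}{|A|}\sum_y D(u,y)+\frac{A(u)}{|A|}\sum_y D(v,y)\le 2D_A(u,v)$. The two terms correspond to ``$u$ is a demand endpoint and $v$ the intermediate'' and the reverse, and each is bounded via the respect condition at that endpoint. This is exactly how the paper gets congestion $2$ in the complete graph with capacities $D_A$, hence $2/\phi$ in $G$.

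Alternatively, you can keep your per-leg bookkeeping but symmetrize. Route $D(x,y)/2$ along $x\to z\to y$ and $D(x,y)/2$ along $y\to z\to x$. Then $\mathrm{out}(u)=\mathrm{in}(u)\le A(u)/2$, each leg really is dominated by $D_A$, and routing the legs separately gives $2/\phi$.
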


\paragraph{Expanders.}
Finally, $G$ is a \emph{$\phi$-flow-expander} (resp., \emph{$\phi$-cut-expander}) if $\deg_{G}$ is $\phi$-flow-expanding (resp., $\phi$-cut-expanding) in $G$.

\section{Expander Decomposition Algorithm}

We now prove the following main result of the paper:
\begin{thm}\label{thm:main body}
  For any undirected graph $G=(V,E)$, an integral node-weighting $A:V\rightarrow\mathbb{Z}_{\ge0}$, and parameter $\phi$, we can efficiently compute $C\subseteq E$ such that
  \begin{itemize}
    \item $|C|\le\phi|A|\log(|A|)\exp(O(\sqrt{\log\log|A|}))$, and
    \item For each connected component $U$ in $G-C$, $A\cap U$ is $\phi$-flow-expanding in $G[U]$.
  \end{itemize}
\end{thm}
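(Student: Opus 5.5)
We will prove \Cref{thm:main body} by a recursive cut procedure built on a spreading-metric LP, in the spirit of the cutwidth clustering of \cite{bansal2024approximating}, rather than plain cut-and-recurse with an approximate sparsest cut. The key is a single primitive, which we call the \emph{certify-or-cut} step. Given the current piece $U$ and the weighting $A_U$, we solve the LP dual to routing the product demand $D_{A_U}$ in $G[U]$ with congestion $1/\phi$. This LP asks for edge lengths $\ell\ge 0$ with $\sum_e \ell_e$ small relative to $\sum_{u,v} D_{A_U}(u,v)\,\dist_\ell(u,v)$.

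If the routing LP is feasible, then by \Cref{fact:product-demand} the weighting $A_U$ is $\phi/2$-flow-expanding. Rescaling $\phi$ by a constant absorbs this factor into the $O(\cdot)$ in the exponent, so we stop on $U$. Otherwise we obtain a metric in which the demand-weighted distances are large compared with the total edge length. We then extract a \emph{structured} cut from it: a cut whose cost is charged not by a uniform $\phi\log n$ per unit of the smaller side, but by a quantity that telescopes over the recursion.

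The charging scheme is the crux. In ordinary cut-and-recurse each unit of $A$ pays $\phi\cdot(\text{loss})$ each time it lies on the smaller side, which happens up to $\log|A|$ times. That gives $\phi\log^2$ overhead for flow, since the loss from the flow-cut gap is $\log n$. To avoid this, we use region growing on the spreading metric in the style of Seymour/ENRS. We grow balls around vertices and cut where the boundary length is at most $\phi\cdot\ln(\text{volume ratio})$ times the volume inside. Then the cost of cutting out a piece $S$ is $\phi\,A(S)\log(A(U)/A(S))$ rather than $\phi\,A(S)\log n$. Summed along any root-to-leaf chain of the recursion, the ratios telescope: $\sum_i \log(A(U_i)/A(U_{i+1}))\le\log|A|$.

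The plan is therefore to prove a lemma of the following form. If $A_U$ is not $\phi$-flow-expanding, we can find disjoint pieces $S_1,\dots,S_k\subseteq U$ whose combined boundary is at most $\phi\,\beta\sum_j A(S_j)\log(A(U)/A(S_j))$, where $\beta=\exp(O(\sqrt{\log\log|A|}))$. We also need the remainder $U\setminus\bigcup_j S_j$ to be guaranteed progress, for example by having a strictly better LP value, or by having all its remaining demand certifiably routable. Given such a lemma, the total cut is bounded by the telescoping sum, which yields $\phi|A|\log|A|\cdot\beta$.

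The main obstacle is the remainder. Region growing cuts off low-volume pieces cheaply, but the large piece that stays behind may still fail to expand, and cutting it repeatedly could cost a $\log$ factor per level without telescoping. Our first attempt to handle this is a multi-scale recursion on distance scales, as in \cite{bansal2024approximating}. We group scales into $\sqrt{\log\log|A|}$ blocks, and at each level we either reduce the metric diameter scale by a factor $2^{\sqrt{\log\log}}$ or make a volume-halving cut. Balancing the number of levels against the per-level loss is what produces the $\exp(\sqrt{\log\log|A|})$ factor. Integrality of $A$ ensures the volumes are at least $1$, so the logs are bounded by $\log|A|$.

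Finally, we verify that flow expansion within each final component $G[U]$ follows from LP feasibility on $U$. This is exactly \Cref{fact:product-demand} applied to $A\cap U$ in $G[U]$.
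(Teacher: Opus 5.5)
Your outer reduction is sound: if you had a lemma giving disjoint pieces $S_1,\dots,S_k$ with $\sum_j|\delta(S_j)|\le\phi\beta\sum_j A(S_j)\log(A(U)/A(S_j))$, induction on $|A|$ with the potential $\phi\beta|A|\log|A|$ would finish the proof. But that lemma is the whole content of the theorem, and the mechanism you propose for it does not work. Seymour/ENRS region growing bounds $|\delta(B(x,r))|$ by $\mathrm{vol}_\ell(B)\ln(\text{ratio of }\ell\text{-volumes})/(r_2-r_1)$, where $\mathrm{vol}_\ell$ comes from the current dual solution. The LP is re-solved from scratch on every $G[S]$, so these volumes bear no relation to $A(S)$, and their ratios do not telescope down the recursion; the paper points out exactly this obstruction.

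Moreover, the lemma cannot be realized by cutting off a single large piece. Take a bounded-degree expander with $A=\deg_G$ and $\phi=\Theta(1/\log n)$. Then $A$ is not $\phi$-flow-expanding, yet every cut has $|\delta(S)|=\Omega(\min\{A(S),A(U\setminus S)\})$. So a piece with $A(S)\approx A(U)/2$ at cost $O(\phi\beta A(S))$ cannot exist. The pieces must be far smaller than $U$, and many of them must be cut simultaneously while paying only about $\log(\#\text{pieces})$.

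Your ``volume-halving cut'' branch is available only in a special situation you never identify. In the paper this is the heavy case: a ball of $\ell$-radius $\Delta_0=1/(4\phi|A|)$ carries half the mass, and a Leighton--Rao sweep from it gives an $O(\phi)$-sparse cut. Also, ``reducing the metric diameter scale'' is not a potential that survives re-solving the LP at each call.

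The missing idea is the paper's scale selection feeding a sparse neighborhood cover:
\begin{itemize}
\item Each $x$ takes the first radius scale $i_x$ at which $\log(|A|/A(B(x,\Delta_i)))$ fails to grow by a factor $\gamma$. It also takes a mass scale $j_x$ with $A(B(x,\Delta_{i_x}))\in(a_{j_x},a_{j_x-1}]$, where $\Delta_i=1/(4\phi|A|8^i)$ and $a_j=|A|/2^{\gamma^j}$.
\item Voting over the $L^2$ pairs yields a class of mass at least $|A|/L^2$.
\item A maximal packing net $N$ of $\Delta_{i^*}$-balls has $|N|\le|A|/a_{j^*}$.
\item \textsc{cluster}, whose loss is $O(\log|T|)$ rather than $O(\log n)$, returns clusters with total boundary $O(\phi|A|8^L\gamma^2\log(|A|/a_{j^*-2}))$.
\item The stabilization rule forces each cluster, which lies in $B(x,4\Delta_{i^*})\subseteq B(x,\Delta_{i^*-1})$, to have mass at most $a_{j^*-2}$.
\end{itemize}
That is precisely your lemma with $\beta=O(8^LL^2\gamma^2)$, stated in the true $A$-mass, which is invariant under recursion.

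With it, the remainder is not an obstacle. It needs no progress guarantee and simply pays $\phi\beta\,A(V\setminus V(\mathcal{S}))\log|A|$ by induction. The coverage bound $|A|/L^2$ is needed only to charge the boundary, which scales with $|A|$, to the covered mass. Likewise, the $\exp(O(\sqrt{\log\log|A|}))$ factor comes from balancing the radius-scale loss $8^L$ against the mass-granularity loss $\gamma^2$ under $\gamma^L\ge\log|A|$. It does not come from trading recursion depth against per-level loss.
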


This implies \Cref{thm:main intro} by setting $A = \deg_G$.


\paragraph{Organization.}
In this section, we first describe the algorithm in \Cref{sec:ed:description} and give its high-level explanation in \Cref{sec:ed:overview}. Then, we prove that the algorithm is well-defined in \Cref{sec:ed:validity}. We formally prove the expansion guarantee and bound the cut size of the decomposition in \Cref{sec:ed:correct}.

\subsection{Algorithm Description}\label{sec:ed:description}
\Cref{alg:ED} uses two standard tools including the Concurrent Multi-Commodity Flow linear program and sparse neighborhood covers, described below.

\paragraph{Concurrent Flow Linear Program.}

Let $D$ be the $A$-product demand where $D(u,v)=\frac{A(u)A(v)}{A(V)}$ for all $u,v$. We consider the following LPs which are dual to each other:

\begin{align*}\label{lp:concurrent-mcf}
  & (\text{Primal}) &  &  &  &  &  & (\text{Dual})\\
  \min\quad & \kappa &  &  &  &  & \max\quad & \sum_{u,v}D(u,v)\text{dist}_{\ell}(u,v)\\
  \text{s.t.}\quad & \sum_{p\ni e}f_{p}\le\kappa & \forall e\in E &  &  &  & \text{s.t.}\quad & \sum_{e\in E}\ell_{e}\le1\\
  & \sum_{p:(u,v)\text{ paths}}f_{p}=D(u,v) & \forall u,v &  &  &  &  & \ell_{e}\ge0 & \forall e\in E\\
  & f_{p}\ge0 & \forall \text{path }p
\end{align*}
The primal LP is the problem of finding a multi-commodity flow that routes the $A$-product demand $D$ with minimum congestion. The dual LP is a relaxation of finding sparse cuts w.r.t. $A$ (see e.g., \cite{Vazirani2001Approximation}).
Both LPs can be solved in polynomial time (e.g., via a compact formulation for flow LP).

\paragraph{Sparse Neighborhood Cover.}
The second tool is the standard sparse neighborhood cover. We use the formulation from \cite{bansal2024approximating}.
\begin{lem}
  [Lemma 6 of \cite{bansal2024approximating}]\label{lemma:clustering}For any weighted graph $G=(V,E)$, terminal set $T=\{v_{1},\dots,v_{|T|}\}\subseteq V$, radius parameter $R$, there is an efficient algorithm $\textsc{cluster}(G,T,R)$ that returns a collection ${\cal S}=\{S_{1},\dots,S_{|T|}\}$ \emph{of disjoint} vertex sets where
  \begin{enumerate}
    \item \textbf{(Cut size):}$\sum_{S\in{\cal S}}|\delta(S)|\le O(\log|T|)\frac{\sum_{e\in E}w_{e}}{R}$,
    \item \textbf{(Covering):} $\bigcup_{S\in{\cal S}}S\supseteq\bigcup_{v\in T}B(v,R)$,
    \item \textbf{(Diameter):} $S_{i}\subseteq B(v_{i},2R)$ for each $v_{i}\in T$.
  \end{enumerate}
  \label{lem:cluster}
\end{lem}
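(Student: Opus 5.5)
We prove \Cref{lemma:clustering} by randomized ball carving in the style of Calinescu--Karloff--Rabani (CKR). Edge weights $w_e$ serve as lengths, and $d$ is the induced shortest-path metric. Draw a single radius $r$ uniformly from $[R,2R]$ and a uniformly random permutation $\pi$ of $T$. Process the terminals in the order $\pi$: when $v_i$ is processed, set $S_i := B(v_i,r)\setminus\bigcup_{j\text{ earlier}}S_j$. Vertices not in any ball remain unassigned. The sets are disjoint by construction, and we index them by their terminals to get ${\cal S}=\{S_1,\dots,S_{|T|}\}$.

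\textbf{Diameter and covering.} Diameter is immediate: $S_i\subseteq B(v_i,r)\subseteq B(v_i,2R)$. For covering, take any $u\in B(v_j,R)$ and consider the moment $v_j$ is processed. Either $u$ already lies in an earlier set, or it is placed in $S_j$ because $d(v_j,u)\le R\le r$. Both properties therefore hold deterministically.

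\textbf{Cut size.} Fix an edge $e=(x,y)$. Each edge of $\delta(S_i)$ has exactly one endpoint in $S_i$, so $e$ contributes to $\sum_i|\delta(S_i)|$ at most twice. It contributes only if some $S_i$ contains exactly one of $x,y$.

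Call terminal $v$ the \emph{settler} of $e$ if it is the first terminal in $\pi$ whose ball $B(v,r)$ contains $x$ or $y$. If $e$ is counted at all, then the settler's ball must contain exactly one endpoint, since otherwise both endpoints land in the same set.

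Order the terminals $u_1,u_2,\dots$ by $\min\{d(u,x),d(u,y)\}$. For $u_k$ to separate the endpoints as settler, two events must occur:
\begin{enumerate}
\item $r$ lies between $d(u_k,x)$ and $d(u_k,y)$. By the triangle inequality this interval has length at most $w_e$, so the probability is at most $w_e/R$.
\item $u_k$ precedes $u_1,\dots,u_{k-1}$ in $\pi$. Any of these that came earlier would reach an endpoint whenever $u_k$ does. This event is independent of $r$ and has probability $1/k$.
\end{enumerate}
Summing over $k$, the probability that $e$ is cut is at most $H_{|T|}\,w_e/R=O(\log|T|)\,w_e/R$. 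When $w_e>R$ the bound exceeds $1$ and holds trivially. Linearity of expectation then gives $\mathbb{E}\bigl[\sum_{S\in{\cal S}}|\delta(S)|\bigr]\le O(\log|T|)\sum_e w_e/R$. (This step reads the cut-size guarantee as $\sum_e w_e/R$ summed over the edges each cut set crosses, the usual form; for unit contribution per edge this is exactly the claim.)

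\textbf{Making it efficient.} Diameter and covering hold for every outcome, so only the cut-size bound needs amplification. By Markov's inequality, a single trial meets twice the expected bound with probability at least $1/2$. Repeating $O(\log n)$ times therefore yields an efficient algorithm that succeeds with high probability. For a deterministic algorithm, we can instead derandomize by conditional expectations. Only polynomially many distinct radii matter, namely the breakpoints $d(v,x)$, and we fix the permutation position by position using the exact conditional expectations, which are computable.

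The main obstacle is the cut-size analysis. The key points are the ``settler'' charging argument and the observation that edges leaving a set toward unassigned vertices are captured by the same separation event, so they need no separate accounting. The rest of the argument is routine.
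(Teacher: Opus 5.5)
The paper gives no proof of this lemma. It imports it as Lemma~6 of \cite{bansal2024approximating} and uses only the three stated properties, so there is no internal route to match.

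Your Calinescu--Karloff--Rabani carving argument is a correct, self-contained proof.
\begin{itemize}
\item Diameter and covering hold for every outcome because $r\in[R,2R]$.
\item The settler argument correctly gives $\Pr[e\text{ is cut}]\le H_{|T|}\,w_e/R$, hence $\mathbb{E}\bigl[\sum_{S\in{\cal S}}|\delta(S)|\bigr]\le 2H_{|T|}\sum_e w_e/R$.
\item This is literally the stated bound: $|\delta(S)|$ counts edges, and $w_e$ enters only through the metric. Your parenthetical hedge about how to read the guarantee is therefore unnecessary.
\end{itemize}

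Compared with the citation, your route makes the paper self-contained. It also gives a per-edge cut probability, which is stronger than the aggregate bound used in \Cref{lem:cluster cut}. Its only cost is randomness, and both of your fixes are sound:
\begin{itemize}
\item The cut size of an outcome is checkable, so Markov plus repetition gives a Las Vegas algorithm.
\item Derandomization by conditional expectations works. Fix $r$ and a prefix of $\pi$. For each edge with neither endpoint yet assigned, the settler is uniform over the unprocessed terminals whose $r$-ball meets the edge. Given the settler, the edge's contribution ($0$, $1$ or $2$) is determined, so the conditional expectations are computable in closed form.
\end{itemize}

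If you prefer a deterministic argument with no derandomization step, Garg--Vazirani--Yannakakis region growing also works. Carve balls in the original metric around the terminals one at a time. Choose each radius in $[R,2R]$ so that the boundary is $O(\log|T|)/R$ times (the ball's volume in the remaining graph plus a seed $\sum_e w_e/|T|$).

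Finally, for $|T|=1$ the bound must be read as $O(1+\log|T|)$, which is exactly what your $H_{|T|}$ gives. This is harmless in the application, since there $\log|N|$ is replaced by $\log(|A|/a_{j^*})\ge 1$.
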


The first condition bounds the total cut size of all clusters $S\in{\cal S}$. It is very crucial that the overhead factor is $O(\log |T|)$ and not just $O(\log n)$.  The second shows that all clusters cover $R$-radius balls around every terminal. The third implies that the weak diameter of each cluster is at most $4R$.

\medskip

Consider the following algorithm.

\medskip 

\begin{algorithm}[H]
  Define $\gamma=\exp(\sqrt{\log\log|A|})$ and $L=\left\lceil \log_{\gamma}\log|A|\right\rceil +1$
  \begin{enumerate}
    \item Solve the concurrent flow LP. If $\kappa<1/\phi$, then return. Else, let $\ell$ be the edge weights in $G$ such that
      \[
        \sum_{e\in E}\ell_{e}\le1\text{ and }\sum_{u,v}D(u,v)\text{dist}_{\ell}(u,v)\ge1/\phi.
      \]
    \item Define the \emph{radius threshold}
      \[
        \Delta_{i}:=1/(4\phi|A|8^{i})\text{ for each }i\ge0.
      \]
      Define the \emph{mass threshold}
      \[
        a_{j}:=|A|/2^{\gamma^{j}}\text{ for each }j\ge-1.
      \]
    \item \textbf{\label{step:heavy-cluster}(Heavy-cluster case):} If there exists  $x\in\supp(A)$ with $A(B(x,\Delta_{0}))\ge|A|/2$, then perform sweep cut from $A(B(x,\Delta_{0}))$ and find a $O(\phi)$-sparse cut $S'$ w.r.t. $A$. (See \Cref{sec:heavy case} for the precise algorithm) and return
      \[
        E(S',V-S')\cup\text{ED}(G[S'],A_{S'},\phi)\cup\text{ED}(G[V-S'],A_{V-S'},\phi).
      \]
    \item For each $x\in\supp(A)$, define the \emph{radius scale} of $x$ as the smallest $i_{x}\ge1$ where
      \[
        \log\frac{|A|}{A(B(x,\Delta_{i}))}\le\gamma\cdot\log\frac{|A|}{A(B(x,\Delta_{i-1}))}.
      \]
      The \emph{mass scale} $j_{x}$ is such that
      \[     A(B(x,\Delta_{i_{x}}))\in(a_{j_{x}},a_{j_{x}-1}].
      \]
      We have $i_{x}\in[1,L]$ and $j_{x}\in[1,L]$ by \Cref{prop:bound radius,prop:bound size}
    \item For each radius and mass scale $(i,j)$, let $V_{i,j}=\{x\in\supp(A)\mid i_{x}=i$ and $j_{x}=j\}$. Let
      \[
        (i^{*},j^{*})=\arg\max_{(i,j)}A(V_{i,j}).
      \]
    \item Let $N\subseteq V_{i^{*},j^{*}}$ be a \emph{net} obtained by a maximal packing of balls of radius $\Delta_{i^{*}}$ around vertices in $V_{i^{*},j^{*}}$. That is, $\{B(x,\Delta_{i^{*}})\}_{x\in N}$ are disjoint and we have $V_{i^{*},j^{*}}\subseteq\bigcup_{x\in N}B(x,2\Delta_{i^{*}})$.
    \item Compute ${\cal S}=\textsc{cluster}(G,T:=N,R:=2\Delta_{i^{*}})$.
    \item \textbf{(Balanced case):} Return
      \[
        \bigcup_{S\in{\cal S}}\delta(S)\cup\bigcup_{S\in{\cal S}}\text{ED}(G[S],A_{S},\phi)\cup\text{ED}(G[V-V({\cal S})],A_{V-V({\cal S})},\phi).
      \]
  \end{enumerate}
  \caption{$\text{ED}(G=(V,E),A,\phi)$}
  \label{alg:ED}
\end{algorithm}

\subsection{High-level Explanation of the Algorithm}\label{sec:ed:overview}
\label{sec:high-level}

This section is informal: the goal is to explain how we adapt the spreading-metric clustering idea of~\cite{bansal2024approximating} to beat the usual ``cut-and-recurse'' overhead in polynomial time.

\paragraph*{From routability to a spreading metric.}
\Cref{alg:ED} begins by solving the concurrent flow LP for the $A$-product demand
$D(u,v)=A(u)A(v)/|A|$.
If the optimum congestion $\kappa < 1/\phi$, then the $A$-product demand is routable with congestion
$1/\phi$, and hence $A$ is already $(\phi/2)$-flow-expanding by \Cref{fact:product-demand}; we can safely stop.

So the interesting case is when $\kappa \ge 1/\phi$.
By LP duality, we obtain a nonnegative length function $\ell$ on edges such that
\[
  \sum_{e\in E}\ell_e \le 1
  \qquad\text{and}\qquad
  \sum_{u,v} D(u,v)\,\dist_\ell(u,v) \;\ge\; \frac{1}{\phi}.
\]
Equivalently, if we sample a random pair $(u,v)$ according to the $A$-product distribution, then
\[
  \mathbb{E}\big[\dist_\ell(u,v)\big] \;\gtrsim\; \frac{1}{\phi|A|}.
\]
Thus, in the metric defined by the lengths $\ell$, a typical pair of $A$-mass points is far apart:
$\ell$ is a \emph{spreading metric} for the demand.

\paragraph*{Warm-up: why a single ``good scale'' would solve the problem.}
To see the key idea, imagine the following idealized situation.
Suppose there exists a radius $\Delta^\star = \Theta(1/(\phi|A|))$ and a mass scale $a^\star$ such that
for \emph{every} vertex $x$,
\begin{equation}
  \label{eq:perfect-scale}
  a^\star \;\le\; A(B(x,\Delta^\star))
  \;\le\; A(B(x,4\Delta^\star))
  \;\le\; 2a^\star,
\end{equation}
where $B(x,r)$ denotes the ball of radius $r$ around $x$ in the $\ell$-metric.
Intuitively, \eqref{eq:perfect-scale} says that at the ``right zoom level'' $\Delta^\star$,
every point sees about $a^\star$ units of $A$-mass nearby, and that increasing the radius by a constant
factor does not suddenly swallow much more $A$-mass.

Now take a maximal packing net $N$ of $\Delta^\star$-balls:
the balls $\{B(x,\Delta^\star)\}_{x\in N}$ are disjoint, but the $2\Delta^\star$-balls cover $V$.\footnote{The factor $2$ for covering the whole $V$ here is the crucial reason why our end result suffers the $\exp(\sqrt{\log \log |A|})$ factor and does not achieve $O(\log\log |A|)$. The factor $2$ is tight; there exists a metric (the projective plane) where, for every $R' < 2\Delta^\star$, the $R'$-balls can cover only a square root fraction of the metric.}
By disjointness and \eqref{eq:perfect-scale}, the size of the net is at most 
\[
  |N| \;\le\; \frac{|A|}{a^\star}.
\]
Following the clustering step of~\cite{bansal2024approximating}, run the neighborhood cover routine from \Cref{lem:cluster} with terminals $T:=N$ and radius
$R:=2\Delta^\star$, obtaining disjoint clusters $\mathcal{S}$.
By \Cref{lem:cluster} the number of cut edges satisfies
\[
  \sum_{S\in\mathcal{S}} |\delta(S)|
  \;\le\;
  O(\log|N|)\cdot \frac{\sum_{e}\ell_e}{2\Delta^\star}
  \;\lesssim\;
  \phi|A|\cdot \log\frac{|A|}{a^\star}.
\]
Moreover, by the diameter guarantee of \Cref{lem:cluster}, each cluster $S\in\mathcal{S}$ lies inside
some ball of radius $4\Delta^\star$ around its terminal, so by \eqref{eq:perfect-scale} it satisfies
$A(S)\le 2a^\star$.
Therefore the total \emph{recursive} cost on the clusters, under the standard potential
$\phi\cdot A(\cdot)\log A(\cdot)$, is at most
\[
  \sum_{S\in\mathcal{S}} \phi\,|A_S|\,\log|A_S|
  \;\le\;
  \sum_{S\in\mathcal{S}} \phi\,|A_S|\,\log(2a^\star)
  \;\lesssim\;
  \phi|A|\,\log a^\star.
\]
Putting the two contributions together yields the telescoping expression
\[
  \phi|A|\log\frac{|A|}{a^\star} \;+\; \phi|A|\log a^\star \;=\; \phi|A|\log|A|.
\]
This is the guiding principle of the whole algorithm:
\begin{quote}
  \emph{We want one clustering step whose boundary cost is $\log(\text{\#clusters})\approx \log(|A|/a)$,
  while ensuring each cluster has mass $\approx a$, so that recursion only pays $\log a$.}
\end{quote}

\paragraph*{Reality: different vertices have different good scales, so we ``vote'' over scales.}
As in~\cite{bansal2024approximating}, the main difficulty is that, in general, there is no single radius $\Delta^\star$ that satisfies \eqref{eq:perfect-scale} for all $x$.
Instead, \Cref{alg:ED} discretizes radii and masses into $L$ scales: for $0\le i \le L$ and $-1 \le j \le L$,
\begin{align}
  \Delta_i := \frac{1}{4\phi|A|\,8^i},
  \qquad
  a_j := \frac{|A|}{2^{\gamma^j}},
  \label{eq:basic-recurrences}
\end{align}
where $\gamma=\exp(\sqrt{\log\log|A|})$ and $L=\left\lceil \log_{\gamma}\log|A|\right\rceil +1= O(\sqrt{\log\log|A|})$.
\footnote{
These parameters are chosen to optimize the various tradeoffs that arise in the analysis later.}

\medskip

For a fixed vertex $x$, consider the growth curve of ball masses as we zoom in:
\[
  A(B(x,\Delta_0)),\ A(B(x,\Delta_1)),\ \ldots,\ A(B(x,\Delta_L)).
\]
Since $A$ is integral, $A(B(x,\Delta_L))\ge 1$ always.
On the other hand, if already $A(B(x,\Delta_0))\ge |A|/2$, then $x$ lies inside a very large
``heavy'' ball, which is handled separately in Step~\ref{step:heavy-cluster} (we discuss this case below).
So in the balanced regime we may assume $A(B(x,\Delta_0))\le |A|/2$ for all $x\in\supp(A)$.

Now define the \emph{radius scale} $i_x$ of $x$ to be the \emph{first} index where the ball stops
shrinking ``too fast'' in a logarithmic sense:
\[
  \log\!\Big(\frac{|A|}{A(B(x,\Delta_{i_x}))}\Big)
  \;\le\;
  \gamma \cdot \log\!\Big(\frac{|A|}{A(B(x,\Delta_{i_x-1}))}\Big).
\]
Intuitively, we keep zooming in until the ``difficulty measure''
$\log(|A|/A(B(x,\Delta_i)))$ fails to increase by a factor $\gamma$ in one step.
Such an index must exist (and indeed $i_x\le L$), because if the measure increased by a factor
$\gamma$ at \emph{every} step, then after $L$ steps it would exceed $\log|A|$, forcing
$A(B(x,\Delta_L))<1$, which is impossible.

\begin{example}
Suppose that as we zoom in by a factor $8$ in radius, the $A$-mass near $x$ shrinks like
\[
  A(B(x,\Delta_0))\approx |A|/2,\qquad A(B(x,\Delta_1))\approx |A|/2^{\gamma^{10}},\qquad A(B(x,\Delta_2))\approx |A|/2^{\gamma^{11}}.
\]
Then $\log(|A|/A(B(x,\Delta_1)))$ is much larger than $\log(|A|/A(B(x,\Delta_0)))$ by a $\gamma^{10}$ factor, but the increase from $\Delta_1$ to $\Delta_2$ is mild (only a $\gamma$ factor). Then, $i_x =2 $ is the first such ``stabilization'' scale.
\end{example}

Next define the \emph{mass scale} $j_x$ by bucketing the mass at the stabilization radius:
\[
  A(B(x,\Delta_{i_x})) \in (a_{j_x},\,a_{j_x-1}].
\]
Observe that indeed $1 \le j_x \le L$ as calculated in \Cref{prop:bound size}.

So each $x\in\supp(A)$ chooses one pair $(i_x,j_x)\in[L]\times[L]$.
We now let the vertices ``vote'' for their chosen pair:
for each $(i,j)$, define $V_{i,j}:=\{x\in\supp(A)\mid i_x=i,\ j_x=j\}$, and pick
\[
  (i^\ast,j^\ast) := \arg\max_{(i,j)} A(V_{i,j}).
\]
Since there are only $L^2$ choices, this guarantees a large consensus class
\begin{equation}
  \label{eq:consensus}
  A(V_{i^\ast,j^\ast}) \;\ge\; \frac{|A|}{L^2}.
\end{equation}
You should think of $V_{i^\ast,j^\ast}$ as a large set of vertices that agree on
(1) the same ``good zoom level'' $\Delta_{i^\ast}$, and (2) the same local mass scale $\approx a_{j^\ast}$.

\paragraph*{The clustering step at the consensus scale.}
We now rerun the warm-up argument, but only using terminals from the consensus class.
Let $N\subseteq V_{i^\ast,j^\ast}$ be a maximal packing net of $\Delta_{i^\ast}$-balls.
Disjointness and the definition of $j^\ast$ imply
\[
  |N| \le |A|/a_{j^\ast}.
\]
We call $\textsc{cluster}(G,T:=N,R:=2\Delta_{i^\ast})$ and obtain disjoint clusters $\mathcal{S}$.

Two properties make this step useful.

\smallskip
\noindent\emph{(i) We cut around a nontrivial amount of $A$-mass.}
Because $N$ is maximal, the $2\Delta_{i^\ast}$-balls around $N$ cover $V_{i^\ast,j^\ast}$, and the
covering guarantee of \Cref{lem:cluster} then implies that the union of clusters $V(\mathcal{S})$ contains all of
$V_{i^\ast,j^\ast}$.
By \eqref{eq:consensus}, this means the recursion peels off at least $|A|/L^2$ mass in one shot.

\smallskip
\noindent\emph{(ii) Each cluster is ``small'' in $A$-mass.}
By the diameter guarantee of \Cref{lem:cluster}, each cluster $S\in\mathcal{S}$ lies inside a ball of radius
$4\Delta_{i^\ast}$ around its terminal $v\in N$.
Since $v$ has radius scale $i^\ast$, the mass of a $(\Theta(\Delta_{i^\ast}))$-ball around $v$
cannot jump too much when we expand the radius by a constant factor; combined with the fact that
$A(B(v,\Delta_{i^\ast}))\approx a_{j^\ast}$ (by the definition of $j^\ast$), this yields an upper bound
of the form
\[
  A(S) \;\lesssim\; a_{j^\ast-O(1)}.
\]
(\Cref{sec:ed:correct} makes this precise; the point here is that the stabilization rule defining $i_x$ is exactly
  what prevents a cluster of weak diameter $O(\Delta_{i^\ast})$ from capturing much more than
the local scale $a_{j^\ast}$.)

\smallskip
\noindent\emph{Boundary accounting.}
Finally, \Cref{lem:cluster} bounds the total boundary of the clusters by
\[
  \sum_{S\in\mathcal{S}}|\delta(S)|
  \;\le\;
  O(\log|N|)\cdot \frac{\sum_e \ell_e}{2\Delta_{i^\ast}}
  \;\approx\;
  \phi|A|\cdot \log|N|
  \le
  \phi|A|\cdot \log\frac{|A|}{a_{j^\ast}}.
\]
This is exactly the ``$\log(\#\text{clusters})$'' term from the warm-up.
Meanwhile, since each cluster has $A$-mass at most $\approx a_{j^\ast-O(1)}$,
the recursion on these clusters only pays $\log a_{j^\ast}$ in the potential.
Thus the same telescoping intuition from the ideal case survives:
\[
  \text{(boundary)}\;\approx\;\log\frac{|A|}{a}
  \qquad\text{and}\qquad
  \text{(recursion)}\;\approx\;\log a
  \quad\Longrightarrow\quad
  \text{total}\;\approx\;\log|A|.
\]
The technical work in \Cref{sec:ed:validity,sec:ed:correct} is to make the words ``$\approx$'' precise and to keep the
losses within $\log^{o(1)}|A|$ factors (coming from the $L$ and $\gamma$ discretization).

\paragraph*{Why we need the heavy-cluster case.}
The only way the ``zoom until stabilization'' story can fail is if we start with a vertex $x$ whose
coarsest ball already contains a constant fraction of the total mass,
$A(B(x,\Delta_0))\ge |A|/2$.
In this situation there is a dense core $K:=B(x,\Delta_0)$ of small $\ell$-diameter containing a large
fraction of $A$.
The dual constraint $\sum_{u,v}D(u,v)\dist_\ell(u,v)\ge 1/\phi$ then forces a significant amount of
demand to cross any sweep cut that moves away from $K$.
A Leighton-Rao sweep algorithm (formalized in \Cref{sec:heavy case}) finds a cut $S'\supseteq K$ that is
$O(\phi)$-sparse with respect to $A$. Since we only pay $O(1)$ overhead factor in this level, we can revert to the standard cut-and-recurse step on
$G[S']$ and $G[V\setminus S']$.
This is exactly Step~\ref{step:heavy-cluster} of \Cref{alg:ED}.

\paragraph*{Why Seymour's telescoping trick does not directly apply.}
Readers familiar with telescoping-volume arguments may wonder whether one can do even better,
for example by charging each level by $\log(\mathrm{vol}_{\ell}(\text{parent})/\mathrm{vol}_{\ell}(\text{child}))$.
The obstruction is that our metric $\ell$ is \emph{not fixed across recursion}.
When we recurse on an induced subgraph $G[S]$, we re-solve the flow LP inside $G[S]$, producing a
new dual metric $\ell_S$ that can be essentially unrelated to the restriction of $\ell$.
Therefore, any ``$\ell$-volume'' potential need not be consistent: a set that looks small in the parent
metric can become large again under the new metric, preventing a clean telescoping argument.

Our approach avoids this instability by telescoping with a quantity that is invariant under recursion:
the true $A$-mass.
We explicitly choose clusters so that the boundary cost depends on $\log(|A|/a)$ (how many clusters we
create), while the recursive cost depends on $\log a$ (how large clusters are in true mass).
Because $A(\cdot)$ does not change when we recurse, these logs add up cleanly to $\log|A|$.

\subsection{Validity of the Algorithm}\label{sec:ed:validity}\label{sec:ed:validity}

First, we show that indeed the radius scale of each $x$ is $i_{x}<L$. The idea is, otherwise, $A(B(x,\Delta_{i+1}))$ would shrink rapidly compared to $A(B(x,\Delta_{i}))$. But even the smallest ball $B(x,\Delta_{L})$ has $A(B(x,\Delta_{L}))\ge1$. So, this can happen only when $A(B(x,\Delta_{0}))>|A|/2$. But this contradicts that there is no heavy cluster from Step \ref{step:heavy-cluster}.
\begin{prop}
  \label{prop:bound radius}For any $x\in\supp(A)$, $1\le i_{x}\le L$.
\end{prop}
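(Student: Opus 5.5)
The bound $i_x\ge 1$ holds by definition, since $i_x$ is the smallest index $i\ge1$ satisfying the stabilization inequality. So the real content is $i_x\le L$, i.e., that the inequality
\[
\log\frac{|A|}{A(B(x,\Delta_{i}))}\le\gamma\cdot\log\frac{|A|}{A(B(x,\Delta_{i-1}))}
\]
holds for some $i\in\{1,\dots,L\}$. I would argue by contradiction: suppose it fails for every $i=1,\dots,L$. Write $h_i:=\log\bigl(|A|/A(B(x,\Delta_i))\bigr)$. The failure assumption gives $h_i>\gamma h_{i-1}$ for all $1\le i\le L$, and iterating yields $h_L>\gamma^L h_0$.

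Next I would lower bound $h_0$ and upper bound $h_L$.

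\textbf{Lower bound on $h_0$.} Since the algorithm reached Step 4, the heavy-cluster case did not apply. Hence $A(B(x,\Delta_0))<|A|/2$ for every $x\in\supp(A)$, and so $h_0>1$ (logs are base $2$).

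\textbf{Upper bound on $h_L$.} The point $x$ lies in $\supp(A)$, so $x\in B(x,\Delta_L)$. Since $A$ is integral, $A(B(x,\Delta_L))\ge A(x)\ge 1$, and therefore $h_L\le\log|A|$.

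Combining the two bounds gives $\log|A|\ge h_L>\gamma^L h_0>\gamma^L$. But $L=\lceil\log_\gamma\log|A|\rceil+1$ gives $\gamma^L\ge\gamma\log|A|\ge\log|A|$, using $\gamma\ge1$. This is a contradiction, so $i_x\le L$.

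The only mildly delicate point is the chain $h_i>\gamma h_{i-1}$. Iterating it requires $h_{i-1}\ge0$, which holds because ball masses are at most $|A|$. Beyond that, I would only need to confirm that the strictness $h_0>1$ (rather than $h_0\ge1$) comes from the non-heavy condition being strict. Even with $h_0\ge1$, the argument survives because of the ``$+1$'' slack in $L$ together with $\gamma>1$. I expect no real obstacle here; the proof is a routine unrolling of the definition.
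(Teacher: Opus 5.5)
Your proposal is correct and follows the paper's proof step for step. You assume the stabilization inequality fails for every $1\le i\le L$ and unroll to $h_L>\gamma^L h_0$. You then use the non-heavy condition for $h_0\ge 1$, and $x\in\supp(A)$ together with integrality for $h_L\le\log|A|$. This contradicts $\gamma^L>\log|A|$, which follows from the choice of $L$. Your worry about needing $h_{i-1}\ge 0$ to iterate is unnecessary, since multiplying by $\gamma>0$ already preserves each inequality, but it does no harm.
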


\begin{proof}
  Suppose that $i_{x}>L$. Then, for all $1\le i\le L$, $\log\frac{|A|}{A(B(x,\Delta_{i}))}>\gamma\cdot\log\frac{|A|}{A(B(x,\Delta_{i-1}))}.$ Thus,
  \[
    \log\frac{|A|}{A(B(x,\Delta_{L}))}\ge\gamma^{L}\log\frac{|A|}{A(B(x,\Delta_{0}))}.
  \]
  We have $A(B(x,\Delta_{L}))\ge1$ as $A$ is integral and $A(B(x,\Delta_{0}))\le|A|/2$, otherwise the algorithm would be in the heavy-cluster case. So,

  \[
    \log|A|\ge\log\frac{|A|}{A(B(x,\Delta_{L}))}\text{ and }\log\frac{|A|}{A(B(x,\Delta_{0}))}\ge1
  \]
  and, hence,
  \[
    \log|A|\ge\gamma^{L}.
  \]
  But, the choice of $\gamma$ and $L$, we have $\gamma^{L}>\log|A|$ which is a contradiction.
\end{proof}
\begin{prop}
  \label{prop:bound size}For any $x\in\supp(A)$, $1\le j_{x}\le L$.
\end{prop}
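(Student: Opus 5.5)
We need to show that $m_x := A(B(x,\Delta_{i_x}))$ lies in $(a_{j},a_{j-1}]$ for some $j\in[1,L]$. Since $a_j = |A|/2^{\gamma^j}$ is strictly decreasing in $j$, the intervals $(a_j,a_{j-1}]$ for $j\ge 0$ partition $(0,a_{-1}]$. Note that $a_{-1}=|A|/2^{1/\gamma}$ and $a_0=|A|/2$. So it suffices to prove two bounds: $m_x\le a_0=|A|/2$, which forces $j_x\ge 1$, and $m_x> a_L$, which forces $j_x\le L$.

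\emph{Upper bound.} Since $i_x\ge 1$ and $\Delta_{i_x}\le\Delta_0$, the ball $B(x,\Delta_{i_x})$ is contained in $B(x,\Delta_0)$, so $m_x\le A(B(x,\Delta_0))$. We are not in the heavy-cluster case of Step~\ref{step:heavy-cluster}, so $A(B(x,\Delta_0))<|A|/2$. Hence $m_x\le a_0$ and $j_x\ge1$.

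\emph{Lower bound.} Since $x\in\supp(A)$ and $A$ is integral, $m_x\ge A(x)\ge 1$. By the choice of $L=\lceil\log_\gamma\log|A|\rceil+1$, we have $\gamma^L\ge\gamma\log|A|>\log|A|$ (using $\gamma>1$). Therefore
\[
a_L=\frac{|A|}{2^{\gamma^L}}<\frac{|A|}{2^{\log|A|}}=1\le m_x .
\]
Hence $j_x\le L$.

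The only point needing care is the boundary convention. If the heavy-case test is ``$\ge |A|/2$'', then being in the balanced case gives a strict inequality $A(B(x,\Delta_0))<|A|/2$, which is more than enough. We also need the strict inequality $\gamma^L>\log|A|$, which is the same fact already used at the end of the proof of \Cref{prop:bound radius}. So this step is routine, and there is no real obstacle.
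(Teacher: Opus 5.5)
Your proof is correct and matches the paper's argument: you sandwich $A(B(x,\Delta_{i_x}))$ between $a_L<1$, which follows from integrality and $\gamma^L>\log|A|$, and $a_0=|A|/2$, which holds because there is no heavy cluster. Your lower bound $A(B(x,\Delta_{i_x}))\ge A(x)\ge 1$ is slightly cleaner than the paper's chain through $A(B(x,\Delta_{L-1}))$, since it does not rely on $i_x\le L$ from the preceding proposition.
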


\begin{proof}
  We have
  \[
    a_{L}<1\le A(B(x,\Delta_{L-1}))\le A(B(x,\Delta_{0}))\le|A|/2=a_{0}
  \]
  because, again, $A$ is integral and there is no heavy cluster in Step \ref{step:heavy-cluster}. Since we assign the mass scale $j_{x}$ to $x$ if $A(B(x,\Delta_{i_{x}}))\in(a_{j_{x}},a_{j_{x}-1}]$, the claim follows.
\end{proof}

\subsection{Correctness}\label{sec:ed:correct}

We verify that the expansion guarantee in each component of $G-C$.
\begin{prop}
  For each connected component $U$ in $G-C$, $A\cap U$ is $(\phi/2)$-flow-expanding in $G[U]$.
\end{prop}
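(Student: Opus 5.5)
The plan is induction on the recursion of \Cref{alg:ED}, since the output $C$ is defined recursively. Base case: the call returns $C=\emptyset$ in Step 1 because $\kappa<1/\phi$. Then the $A$-product demand $D_A$ is routable in $G$ with congestion less than $1/\phi$, and \Cref{fact:product-demand} shows that $A$ is $(\phi/2)$-flow-expanding in $G$. For every connected component $U$ of $G$, any $A_U$-respecting demand is in particular $A$-respecting, so it routes in $G$ with congestion at most $2/\phi$. A flow path between two vertices of $U$ never leaves the component $U$, so this routing actually lives in $G[U]$. Hence $A\cap U$ is $(\phi/2)$-flow-expanding in $G[U]$. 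If $G$ is disconnected, I would also note that $D_A$ places demand between components; the LP then has $\kappa=\infty$ and we do not return. This case is therefore vacuous or harmless, and in any case no demand inside a component is affected.

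Inductive step: the call recurses (heavy-cluster case or balanced case). In both cases the returned set $C$ has the form $F\cup\bigcup_k C_k$, where $\{W_k\}$ is a partition of $V$ into vertex sets and $F\supseteq\bigcup_k\delta(W_k)$. In the heavy case the parts are $S'$ and $V-S'$ with $F=E(S',V-S')$. In the balanced case the parts are the clusters $S\in\mathcal S$ together with $V-V(\mathcal S)$. Here I use that the clusters are disjoint by \Cref{lemma:clustering}, and that $\delta(V-V(\mathcal S))\subseteq\bigcup_S\delta(S)$. Each $C_k$ is the output of $\text{ED}(G[W_k],A_{W_k},\phi)$. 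So every edge of $G$ between different parts lies in $C$, and each connected component $U$ of $G-C$ lies inside a single part $W_k$. Moreover, $U$ is a connected component of $G[W_k]-C_k$, because $G-C$ restricted to $W_k$ equals $G[W_k]-(C\cap E(G[W_k]))$, and the extra edges of $C$ inside $W_k$ come only from other recursive calls, which are edge-disjoint from $G[W_k]$.

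Applying the inductive hypothesis to the call on $(G[W_k],A_{W_k})$ shows that $A_{W_k}\cap U=A\cap U$ is $(\phi/2)$-flow-expanding in $G[W_k][U]=G[U]$, which is the claim. For the induction to be well-founded, recursion must terminate, and the main obstacle is showing that each recursive call is on a strictly smaller instance. In the balanced case, $V(\mathcal S)\supseteq V_{i^*,j^*}\neq\emptyset$. Each cluster $S$ is a proper subset of $V$ because it has mass at most roughly $a_{j^*-O(1)}<|A|$, or alternatively we may use a measure such as $(|\supp(A)|,|V|)$. In the heavy case, the sweep cut $S'$ is nontrivial. I would handle this by inducting on $|V|$, and I would check that every part is a proper nonempty subset, or that it is empty, which gives a trivial call.
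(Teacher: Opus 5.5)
Your proposal is correct and follows the same route as the paper: induction over the recursion of \Cref{alg:ED}. The base case $\kappa<1/\phi$ is handled by \Cref{fact:product-demand}, and the inductive step simply passes to the recursive calls. The paper compresses all of this into one sentence. Your explicit checks (every edge between recursive parts lies in $C$, so each component of $G-C$ is a component of some $G[W_k]-C_k$, and the recursion terminates) are welcome details rather than a different argument.
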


\begin{proof}
  When the optimal solution of the LP is $\kappa<1/\phi$. This means that the $A$-product demand is routable in $G$ with congestion less than $1/\phi$. So, $A$ is a $\phi/2$-flow-expanding in $G$. The claim follows from applying this argument on each induced subgraph in the recursion.
\end{proof}
The next two lemmas bound the cut size (excluding the recursion). The heavy-cluster case is easy and we obtain an $O(1)$-approximate sparsest cut by closely following the technique by Leighton and Rao \cite{leighton1999multicommodity}. Thus, we defer this standard proof to \Cref{sec:heavy case}.
\begin{lem}
  \label{lem:heavy cut}$|\delta(S')|\le12\phi\min\{A(S'),A(V-S')\}$ and $S'$ can be found efficiently.
\end{lem}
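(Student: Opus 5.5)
We run a Leighton--Rao style sweep from the heavy core. Let $x\in\supp(A)$ with $K:=B(x,\Delta_0)$ and $A(K)\ge|A|/2$. For $r\ge0$ let $S_r:=B(x,r)$. We sweep $r$ over the interval $[\Delta_0,2\Delta_0]$, or more generally over $r\ge\Delta_0$, and output the sweep cut minimizing $|\delta(S_r)|/\min\{A(S_r),A(V-S_r)\}$. Efficiency is immediate: only $n$ distinct balls arise, ordered by $\dist_\ell(x,\cdot)$. The work is to show that some $r$ achieves ratio at most $12\phi$.

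\textbf{Lower bound on the demand spread.} For each $v$, $\dist_\ell(u,v)\le \dist_\ell(u,x)+\dist_\ell(x,v)$. Hence
\[
\sum_{u,v}D(u,v)\dist_\ell(u,v)\le \sum_{u,v}\frac{A(u)A(v)}{|A|}\bigl(d(u)+d(v)\bigr)\le 2\sum_v A(v)d(v),
\]
where $d(v):=\dist_\ell(x,v)$. Vertices in $K$ contribute at most $\Delta_0\cdot|A|=1/(4\phi)$. Therefore
\[
\sum_v A(v)\max\{d(v)-\Delta_0,0\}\ge \tfrac{1}{2\phi}-\tfrac{1}{4\phi}-\text{(small)} .
\]
We need to be careful with constants. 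A cleaner route is to bound the spread using $d(v)-\Delta_0$ for vertices outside $K$ and $\le 2\Delta_0$ for pairs inside $K$, which gives $\sum_v A(v)(d(v)-\Delta_0)^+\gtrsim 1/(4\phi)$ up to constants.

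\textbf{Coarea step.} Write
\[
\sum_v A(v)(d(v)-\Delta_0)^+=\int_{\Delta_0}^\infty A(V-S_r)\,dr.
\]
On the other side, $\sum_e\ell_e\ge\int_{\Delta_0}^\infty|\delta(S_r)|\,dr$. For $r\ge\Delta_0$ we have $S_r\supseteq K$, so $A(S_r)\ge|A|/2\ge A(V-S_r)$, and thus $\min\{A(S_r),A(V-S_r)\}=A(V-S_r)$. Consequently
\[
\int|\delta(S_r)|\,dr\le 1\le c\phi\int A(V-S_r)\,dr,
\]
and averaging gives some $r$ with $|\delta(S_r)|\le c\phi\, A(V-S_r)=c\phi\min\{A(S_r),A(V-S_r)\}$, which is nontrivial since $A(V-S_r)>0$ on the integrated range.

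\textbf{Main obstacle.} The hard step is the first one: getting the constant $12$ from $\sum D\dist\ge1/\phi$. I would decompose each pair distance as $d(u)+d(v)\le 2\Delta_0+(d(u)-\Delta_0)^++(d(v)-\Delta_0)^+$. This gives
\[
\tfrac{1}{\phi}\le 2\Delta_0|A|+2\sum_v A(v)(d(v)-\Delta_0)^+=\tfrac{1}{2\phi}+2\int A(V-S_r)\,dr,
\]
so $\int A(V-S_r)\,dr\ge 1/(4\phi)$, yielding a constant of $4\le12$. The slack covers any discretization or tie issues in the sweep.
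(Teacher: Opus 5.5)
Your argument is correct and is the same Leighton--Rao sweep-cut idea as the paper, but normalized differently. The paper sweeps by $\pi(v)=\dist_\ell(v,K)$ and bounds $\min_k|\delta(S_k)|/D(S_k,V-S_k)$ by the ratio of $\sum_{(u,v)\in E}|\pi(u)-\pi(v)|$ to $\sum_{u,v}D(u,v)|\pi(u)-\pi(v)|$. It lower-bounds the latter by $\sum_u\pi(u)D(u,K)\ge\frac13\sum_u\pi(u)A(u)$ using the product structure of $D$. Only at the end does it pass to $A$-sparsity, through $D(S,V-S)\le\min\{A(S),A(V-S)\}$.

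You instead sweep balls around the center $x$ and divide directly by $A(V-S_r)$. This is legitimate because every swept set contains the majority core $K$, so the outside is always the lighter side. It removes the detour through the demand cut and the factor $A(K)/|A|$. Your truncation $d(u)\le\Delta_0+(d(u)-\Delta_0)^+$ also uses only the radius of $K$ rather than its diameter, which is why you get the constant $4$ instead of $12$. Note that $B(x,\Delta_0)$ can have diameter $2\Delta_0$, twice the bound assumed in the paper's lemma, although the paper's stated conclusion $\sum_u\pi(u)A(u)\ge 1/(4\phi)$ still holds. In exchange, the paper's version is a standalone lemma for any core with $A(K)\ge|A|/3$, and its cut is sparse even relative to the demand crossing it, which is a slightly stronger object.

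Two minor points to tidy up:
\begin{itemize}
\item The sweep must range over all $r\ge\Delta_0$. Your averaging does not justify restricting to $[\Delta_0,2\Delta_0]$, so drop that hedge.
\item Take the averaging over the range where $A(V-S_r)>0$. Beyond it, $|\delta(S_r)|$ can be positive because of zero-weight vertices. Discarding that range only decreases $\int|\delta(S_r)|\,dr$, so the conclusion stands.
\end{itemize}
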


In the balanced case, we bound the cut size as follows.

\begin{lem}
  \label{lem:cluster cut}$\sum_{S\in{\cal S}}|\delta(S)|\le c_{0}\phi8^{L}\gamma^{2}\cdot|A|\log\frac{|A|}{a_{j^{*}-2}}$ for some constant $c_{0}$.
\end{lem}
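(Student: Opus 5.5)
We will apply the cut-size guarantee of \Cref{lem:cluster} with $T=N$ and $R=2\Delta_{i^*}$, using $\sum_e \ell_e\le 1$. This gives
\[
\sum_{S\in{\cal S}}|\delta(S)|\le O(\log|N|)\cdot\frac{1}{2\Delta_{i^*}} = O(\log|N|)\cdot 2\phi|A|8^{i^*}\le O(1)\cdot\phi 8^{L}|A|\log|N|,
\]
since $i^*\le L$ by \Cref{prop:bound radius}. (If $|N|=1$ we interpret $\log|N|$ as $O(1)$, which is absorbed because $\log\frac{|A|}{a_{j^*-2}}\ge 1$ for $j^*\ge 1$.) The whole lemma therefore reduces to bounding $\log|N|\le O(\gamma^2)\log\frac{|A|}{a_{j^*-2}}$.

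To bound $|N|$, we use the packing property. The balls $\{B(x,\Delta_{i^*})\}_{x\in N}$ are disjoint. Every $x\in N\subseteq V_{i^*,j^*}$ has $A(B(x,\Delta_{i^*}))>a_{j^*}$ by the definition of the mass scale. Hence $|N|\cdot a_{j^*}<|A|$, so
\[
\log|N|\le\log\frac{|A|}{a_{j^*}}=\gamma^{j^*}.
\]

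It remains to compare $\gamma^{j^*}$ with $\log\frac{|A|}{a_{j^*-2}}=\gamma^{j^*-2}$. Directly from the definition $a_j=|A|/2^{\gamma^j}$ we get $\gamma^{j^*}=\gamma^2\cdot\gamma^{j^*-2}$, and this holds for all $j^*\ge1$, including $j^*-2=-1$. Combining,
\[
\sum_{S\in{\cal S}}|\delta(S)|\le c_0\phi8^L\gamma^2|A|\log\frac{|A|}{a_{j^*-2}}.
\]

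The $\gamma^2$ and the index $j^*-2$ in the statement are chosen to match the recursion bound $A(S)\lesssim a_{j^*-2}$ that will be proved later. That later step, which uses the stabilization rule, is the real obstacle of the section; this lemma needs only the net-size count and \Cref{lem:cluster}. The one point to check carefully is that the $\log|T|$ factor in \Cref{lem:cluster} is in terms of the number of terminals, not $n$. This is exactly what makes the net bound useful.
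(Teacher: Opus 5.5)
Your proof is correct and is essentially the paper's argument: the packing bound $|N|\le |A|/a_{j^*}$, the identity $\log(|A|/a_{j^*})=\gamma^{j^*}=\gamma^2\log(|A|/a_{j^*-2})$, and the cut-size guarantee of \Cref{lemma:clustering} combined with $\sum_e\ell_e\le 1$ and $\Delta_{i^*}\ge 1/(4\phi|A|8^L)$. One small slip is in your $|N|=1$ aside: $\log(|A|/a_{j^*-2})$ equals $\gamma^{-1}<1$ when $j^*=1$, but the absorption still goes through because $\gamma^2\log(|A|/a_{j^*-2})=\gamma^{j^*}\ge\gamma\ge 1$.
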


\begin{proof}
  For each $x\in N$, we have $A(B(x,\Delta_{i^{*}}))>a_{j^{*}}$ since $x$ has radius scale $i^{*}$ and mass scale $j^{*}$. But the $\Delta_{i^{*}}$-radius balls around each $x\in N$ are disjoint. So
    $|N|\le|A|/a_{j^{*}}$.
  As from definition $\log(|A|/a_{j})=\gamma^{j}$ for all $j$, we have the bound
  \begin{align}
    \log|N|\le\log\frac{|A|}{a_{j^{*}}}=\gamma^{2}\log\frac{|A|}{a_{j^{*}-2}}.
    \label{eq:log-net-bound}
  \end{align}
  By the cut size guarantee of $\textsc{cluster}$ from \Cref{lemma:clustering}, we have that
  \begin{align*}
    \sum_{S\in{\cal S}_{(i^{*},j^{*})}}|\delta(S)| & =O(\log|N|)\frac{\sum_{e\in E}\ell_{e}}{2\Delta_{i^{*}}} \\ 
    &\le c_{0}\phi8^{L}\gamma^{2}\cdot|A|\log\frac{|A|}{a_{j^{*}-2}}
  \end{align*}
  for some large enough constant $c_{0}$. Here, in the second line we used that $\sum_{e\in E}\ell_{e}\le1$,  the bound in \eqref{eq:log-net-bound} and that $\Delta_{i^{*}}=1/(4\phi|A|8^{i^{*}}) \geq  1/(4\phi|A|8^{L})$.
\end{proof}
The next two lemmas are needed for our the induction proof.
\begin{lem}
  \label{lem:bound size cluster}For each cluster $S\in{\cal S}$, $|A_{S}|\le a_{j^{*}-2}$
\end{lem}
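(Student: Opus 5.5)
Fix a cluster $S\in{\cal S}$, and let $v\in N$ be the terminal it is indexed by. The diameter guarantee of \Cref{lemma:clustering}, applied with $R=2\Delta_{i^{*}}$, gives $S\subseteq B(v,4\Delta_{i^{*}})$. Hence $|A_S|\le A(B(v,4\Delta_{i^{*}}))$. Since $\Delta_{i^{*}-1}=8\Delta_{i^{*}}\ge 4\Delta_{i^{*}}$ and $i^{*}\ge1$, this ball lies inside $B(v,\Delta_{i^{*}-1})$. So the plan is to prove
\[
A(B(v,\Delta_{i^{*}-1}))\le a_{j^{*}-2}.
\]
The point is that the radius-scale rule compares exactly the consecutive scales $\Delta_{i^{*}}$ and $\Delta_{i^{*}-1}$, and the factor $8$ in the radius discretization is there to absorb the diameter blow-up from clustering.

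For this, use the two facts about $v$. First, $v\in V_{i^{*},j^{*}}$, so $v$ has radius scale $i_v=i^{*}$, and the defining inequality holds at $i=i^{*}$:
\[
\log\frac{|A|}{A(B(v,\Delta_{i^{*}}))}\le\gamma\log\frac{|A|}{A(B(v,\Delta_{i^{*}-1}))}.
\]
Second, $v$ has mass scale $j^{*}$, so $A(B(v,\Delta_{i^{*}}))\le a_{j^{*}-1}$. This gives $\log(|A|/A(B(v,\Delta_{i^{*}})))\ge\log(|A|/a_{j^{*}-1})=\gamma^{j^{*}-1}$. Combining the two,
\[
\log\frac{|A|}{A(B(v,\Delta_{i^{*}-1}))}\ge\gamma^{j^{*}-2}=\log\frac{|A|}{a_{j^{*}-2}},
\]
that is, $A(B(v,\Delta_{i^{*}-1}))\le a_{j^{*}-2}$. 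Note that $a_{-1}=|A|/2^{1/\gamma}$ is defined, so the case $j^{*}=1$ causes no indexing issue.

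The only delicate point is the direction of the inequalities. The mass-scale bucket supplies an upper bound on the ball mass at radius $\Delta_{i^{*}}$. That becomes a lower bound on the log-ratio, and the stabilization inequality transfers it, divided by $\gamma$, to the larger radius $\Delta_{i^{*}-1}$. Nothing else is required: no property of the net $N$ beyond $N\subseteq V_{i^{*},j^{*}}$, and no heavy-cluster assumption. I expect the main (minor) obstacle to be checking that $4\Delta_{i^{*}}\le\Delta_{i^{*}-1}$, so that monotonicity of ball mass in the radius applies.
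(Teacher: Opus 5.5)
Your proof is correct and follows essentially the same route as the paper. The diameter guarantee puts $S$ inside $B(v,4\Delta_{i^{*}})\subseteq B(v,\Delta_{i^{*}-1})$, and the radius-scale inequality combined with the mass-scale bound $A(B(v,\Delta_{i^{*}}))\le a_{j^{*}-1}$ gives $A(B(v,\Delta_{i^{*}-1}))\le a_{j^{*}-2}$; the only cosmetic difference is that you work with the log-ratios directly, while the paper exponentiates and computes $|A|^{1-1/\gamma}a_{j^{*}-1}^{1/\gamma}=a_{j^{*}-2}$.
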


\begin{proof}
  By the diameter bound of $\textsc{cluster}$ from \Cref{lemma:clustering}, $S\subseteq B(x,4\Delta_{i^{*}})\subseteq B(x,\Delta_{i^{*}-1})$ for some $x\in N$. By the definition of radius scale, we have
  \[
    \frac{|A|}{A(B(x,\Delta_{i^{*}}))}\le \left(\frac{|A|}{A(B(x,\Delta_{i^{*}-1}))}\right)^{\gamma}.
  \]
  Thus,
  \begin{align*}
    A(B(x,\Delta_{i^{*}-1})) & \le|A|^{1-1/\gamma}A(B(x,\Delta_{i^{*}}))^{1/\gamma}\\
    & \le|A|^{1-1/\gamma}a_{j^{*}-1}^{1/\gamma}\\
    & =a_{j^{*}-2}.
  \end{align*}
  where the second line is because $x$ has mass scale $j^{*}$. To see the last line, write $a=|A|$. Observe that
  \[
    |A|^{1-1/\gamma}a_{j^{*}-1}^{1/\gamma}=a^{1-1/\gamma}\frac{a^{1/\gamma}}{2^{(\gamma^{j^{*}-1})/\gamma}}=\frac{a}{2^{\gamma^{j^{*}-2}}}=a_{j^{*}-2}.
  \]
\end{proof}
\begin{lem}
  \label{lem:balanced enough}$A(V({\cal S}))\ge|A|/L^{2}$.
\end{lem}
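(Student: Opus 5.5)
The plan is to combine a pigeonhole bound over the scale pairs with the covering guarantee of \textsc{cluster}, in three steps.

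\textbf{Step 1 (pigeonhole).} By \Cref{prop:bound radius,prop:bound size}, every $x\in\supp(A)$ is assigned a pair $(i_x,j_x)\in[1,L]\times[1,L]$. Hence the sets $V_{i,j}$ for $(i,j)\in[L]\times[L]$ partition $\supp(A)$, and there are at most $L^2$ of them. Since $A(\supp(A))=|A|$ and $(i^*,j^*)$ maximizes $A(V_{i,j})$, averaging gives
\[
A(V_{i^*,j^*})\ge |A|/L^2 .
\]

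\textbf{Step 2 (net covers the consensus class).} The net $N$ is a maximal packing of $\Delta_{i^*}$-balls centered at vertices of $V_{i^*,j^*}$. Take any $y\in V_{i^*,j^*}$. If $\dist_\ell(y,x)\ge 2\Delta_{i^*}$ for every $x\in N$, then the ball $B(y,\Delta_{i^*})$ would be disjoint from every $B(x,\Delta_{i^*})$ with $x\in N$, by the triangle inequality. We could then add $y$ to $N$, contradicting maximality. The only subtlety is the boundary convention: balls may be closed, in which case one uses strict inequality or a slight perturbation. This is exactly the stated property $V_{i^*,j^*}\subseteq\bigcup_{x\in N}B(x,2\Delta_{i^*})$, so I would simply invoke it.

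\textbf{Step 3 (clusters cover the net balls).} Apply the covering guarantee of \Cref{lemma:clustering} with $T=N$ and $R=2\Delta_{i^*}$. This gives
\[
V(\mathcal S)=\bigcup_{S\in\mathcal S}S\supseteq\bigcup_{x\in N}B(x,2\Delta_{i^*})\supseteq V_{i^*,j^*}.
\]
Since $A\ge 0$, monotonicity yields $A(V(\mathcal S))\ge A(V_{i^*,j^*})\ge |A|/L^2$.

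No step is a real obstacle. The only point to check is that the radii match: the net guarantees covering of $V_{i^*,j^*}$ at radius $2\Delta_{i^*}$, and this is precisely the radius $R$ passed to \textsc{cluster}. The covering property therefore applies directly without any loss.
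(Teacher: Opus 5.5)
Your proposal is correct and is essentially the paper's proof: the chain $V(\mathcal S)\supseteq\bigcup_{x\in N}B(x,2\Delta_{i^*})\supseteq V_{i^*,j^*}$ comes from the covering property of \textsc{cluster} with $R=2\Delta_{i^*}$ together with maximality of the net, and pigeonhole over the at most $L^2$ pairs gives $A(V_{i^*,j^*})\ge|A|/L^2$. Your justification of the net-covering step is fine and needs no perturbation, since for closed balls $y\notin\bigcup_{x\in N}B(x,2\Delta_{i^*})$ already means $\dist_\ell(y,x)>2\Delta_{i^*}$ strictly, which is what disjointness requires.
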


\begin{proof}
  By the covering property of $\textsc{cluster}$ from \Cref{lemma:clustering}, and as the Algorithm invokes \Cref{lemma:clustering} with $R=2\Delta_{i^{*}}$, we have that
  \begin{align*}
    V({\cal S})=\bigcup_{S\in{\cal S}}S & \supseteq\bigcup_{v\in N}B(v,2\Delta_{i^{*}}) \supseteq V_{i^{*},j^{*}}
  \end{align*}
  where the final inclusion follows as $N$ is a maximal packing of balls of radius $\Delta_{i^*}$ centered at vertices in $V_{i^{*},j^{*}}$. The claim now follows because $A(V_{i^{*},j^{*}})\ge|A|/L^{2}$ by the choice of $(i^{*},j^{*})$.
\end{proof}
Now, we are ready to conclude the bound on $|C|$. This would complete the proof.
\begin{lem}
  Let $C=\text{ED}(G,A,\phi)$. We have $|C|\le\phi\beta(|A|)|A|\log|A|$ where $\beta(|A|)=c_{1}8^{L}L^{2}\gamma^{2}=\exp(O(\sqrt{\log\log|A|})$ and $c_{1}$ is some constant.
  \end{lem}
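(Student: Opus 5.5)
We argue by strong induction on $|A|$. The inductive hypothesis is that for every graph and integral node-weighting $A'$ with $|A'|<|A|$, we have $|\text{ED}(G',A',\phi)|\le\phi\beta(|A'|)|A'|\log|A'|$. Since $\beta$ is nondecreasing, we may replace $\beta(|A'|)$ by $\beta(|A|)$ in every recursive call. Write $\beta:=\beta(|A|)$ and $f(a):=a\log a$, which is superadditive. The base cases are straightforward. If $\kappa<1/\phi$, or $|A|$ is a small constant, nothing is cut (or we cut trivially). Weights with $A(S)=0$ contribute zero, as we can ensure by discarding zero-mass pieces or by noting that the recursion cuts nothing there.

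\textbf{Heavy-cluster case.} Let $a=\min\{A(S'),A(V-S')\}\le|A|/2$ and $b=|A|-a$. By \Cref{lem:heavy cut} and induction, the total cut is at most
\[
12\phi a+\phi\beta\,(a\log a+b\log b).
\]
Since $a\le|A|/2$, we have $a\log a\le a\log|A|-a$. Also $b\log b\le b\log|A|$. Hence the total is at most $\phi\beta|A|\log|A|-\phi\beta a+12\phi a$, which is at most $\phi\beta|A|\log|A|$ once $c_1\ge12$, so that $\beta\ge12$. We must check that both sides are strictly smaller than $|A|$ so that induction applies. If $a=0$ the sparse cut has no edges, and we would need to treat this degenerate case separately; I would argue that $S'$ is nontrivial with respect to $A$, since otherwise the cut bound $12\phi\cdot 0$ forces an empty cut, and then the components are handled independently anyway.

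\textbf{Balanced case.} Let $M:=A(V(\mathcal S))\ge|A|/L^2$ by \Cref{lem:balanced enough}. Write $a^\ast:=a_{j^\ast-2}$, so that by \Cref{lem:bound size cluster} every cluster has $A(S)\le a^\ast$. By induction the clusters cost at most $\phi\beta\sum_S A(S)\log a^\ast=\phi\beta M\log a^\ast$, and the remainder costs at most $\phi\beta(|A|-M)\log|A|$. Adding \Cref{lem:cluster cut}, the total is at most
\[
\phi\beta|A|\log|A|-\phi\beta M\log\frac{|A|}{a^\ast}+c_0\phi8^L\gamma^2|A|\log\frac{|A|}{a^\ast}.
\]
Using $M\ge|A|/L^2$, the negative term is at least $\phi\beta\frac{|A|}{L^2}\log\frac{|A|}{a^\ast}$. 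Choosing $c_1\ge c_0$, so that $\beta/L^2=c_18^L\gamma^2\ge c_08^L\gamma^2$, the negative term dominates the boundary term. Note that $\log(|A|/a^\ast)=\gamma^{j^\ast-2}\ge\gamma^{-1}>0$, since $j^\ast\ge1$, so the sign is right. Applying the induction requires each recursive instance to have strictly smaller mass. Each cluster satisfies $A(S)\le a^\ast\le a_{-1}=|A|/2^{1/\gamma}<|A|$, and the remainder has mass $|A|-M<|A|$.

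Finally, $\beta=c_18^LL^2\gamma^2$ with $L=O(\sqrt{\log\log|A|})$ and $\gamma=\exp(\sqrt{\log\log|A|})$ gives $\exp(O(\sqrt{\log\log|A|}))$.

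\textbf{Main obstacle.} The delicate point is monotonicity of $\beta$ in the induction, and with it the handling of $L$ as a function of $|A|$ in subcalls. The recursion uses the subinstance's own $\gamma$ and $L$, which are smaller, and we need $\beta(|A'|)\le\beta(|A|)$. I would define $\beta$ via a nondecreasing expression, rounding up if necessary, to ensure this. The careful check that the $\gamma^2$ loss in \Cref{lem:cluster cut} exactly matches the recursion's saving of $\log(|A|/a_{j^\ast-2})$ is the crux, and it is handled by the algebra above.
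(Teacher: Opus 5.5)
Your proposal is correct and is essentially the paper's proof. It uses the same induction, with $\beta(|A'|)\le\beta(|A|)$ applied to subcalls; the heavy case closes via $\log a\le\log|A|-1$ and $\beta\ge 12$; and the balanced case closes by charging the boundary cost of \Cref{lem:cluster cut} to $A(V(\mathcal S))\ge|A|/L^2$ against the saving $\log(|A|/a_{j^*-2})$ supplied by \Cref{lem:bound size cluster}. Your extra checks (strict decrease of mass in every recursive call, monotonicity of $\beta$) are left implicit in the paper, and the degenerate case $a=0$ you flag is avoided by taking the sweep minimum only over indices $k$ with $D(S_k,V-S_k)>0$, which the averaging argument of \Cref{sec:heavy case} permits, so both sides of $S'$ carry positive integral mass.
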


  \begin{proof}
    There are two cases.

    \paragraph{(Heavy-component case): }

    Assume w.l.o.g. that $A(S')\le|A|/2$. By \Cref{lem:heavy cut}, we have
    \begin{align*}
      |C| & \le12\phi A(S')+|\text{ED}(G[S'],A_{S'},\phi)|+|\text{ED}(G[V-S'],A_{V-S'},\phi)|\\
      & \le12\phi|A_{S'}|+\phi\beta(|A|)|A_{S'}|(\log|A|-1)+\phi\beta(|A|)|A_{V-S'}|\log|A|\\
      & \le\phi\beta(|A|)|A_{S'}|\log|A|+\phi\beta(|A|)|A_{V-S'}|\log|A|\\
      & =\phi\beta(|A|)|A|\log|A|.
    \end{align*}
    The second inequality is because $A(S')\le|A|/2$ and the third is because $12\le\beta(|A|)$.

    \paragraph{(Balanced case): }

    We have
    \begin{align*}
      |C| & \le c_{0}\phi8^{L}\gamma^{2}\cdot|A|\log\frac{|A|}{a_{j^{*}-2}}+\sum_{S\in{\cal S}}|\text{ED}(G[S],A_{S},\phi)|+|\text{ED}(G[V-V({\cal S})],A_{V-V({\cal S})},\phi)|\\
      & \le c_{0}\phi L^{2}8^{L}\gamma^{2}\cdot|A_{V({\cal S})}|\log\frac{|A|}{a_{j^{*}-2}}+\sum_{S\in{\cal S}}\phi\beta(|A_{S}|)|A_{S}|\log|A_{S}|+\phi\beta(|A_{V-V({\cal S})}|)|A_{V-V({\cal S})}|\log|A_{V-V({\cal S})}|\\
      & \le\phi\beta(|A|)|A_{V({\cal S})}|\log\frac{|A|}{a_{j^{*}-2}}+\phi\beta(|A|)|A_{V({\cal S})}|\log a_{j^{*}-2}+\phi\beta(|A|)|A_{V-V({\cal S})}|\log|A|\\
      & =\phi\beta(|A|)|A|\log|A|.
    \end{align*}
    The first line is by \Cref{lem:cluster cut}. The second line is by \Cref{lem:balanced enough} and the induction hypothesis. The third line is because $\beta(|A|)=c_{1}8^{L}L^{2}\gamma^{2}\ge c_{0}L^{2}8^{L}\gamma^{2}$ and \Cref{lem:bound size cluster}.
  \end{proof}

  \section{Heavy-Component Case}

  \label{sec:heavy case}
Throughout this section, $\mathrm{diam}_{\ell}(K):=\max_{u,v\in K}\dist_{\ell}(u,v)$ denotes the (strong) diameter of $K$ in the shortest-path metric induced by $\ell$.

  Observe that the lemma below implies \Cref{lem:heavy cut}.
  \begin{lem}
    Let $G=(V,E)$ be a graph with edge length $\ell$ and $A$ be a node-weighting. Suppose that $\sum_{e\in E}\ell_{e}\le1\text{ and }\sum_{u,v}D(u,v)\text{dist}_{\ell}(u,v)\ge1/\phi.$ Let $K\subseteq V$ be a set where $A(K)\ge A(V)/3$ and diameter $\mathrm{diam}_{\ell}(K)\le\frac{1}{4\phi|A|}$. Then, there is an efficient algorithm that finds a set $S'\supseteq K$ where
    \[
      |\delta(S')|\le12\phi\min\{A(S'),A(V-S')\}.
    \]
  \end{lem}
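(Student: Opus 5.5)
We run a Leighton--Rao sweep outward from the core $K$. Write $d(v):=\dist_\ell(v,K)=\min_{k\in K}\dist_\ell(v,k)$ and, for $r\ge0$, let $S_r:=\{v: d(v)\le r\}\supseteq K$. Since $d$ is $1$-Lipschitz along edges, integrating over $r$ gives
\[
  \int_0^\infty |\delta(S_r)|\,dr \le \sum_{e\in E}\ell_e\le 1 .
\]
We will compare this with a lower bound on $\int_0^\infty \min\{A(S_r),A(V-S_r)\}\,dr$. The $S_r$ take only $n$ distinct values, so the best one can be found efficiently, and an averaging argument then yields $S'=S_r$ for some $r$ with the claimed sparsity.

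\textbf{Key step: lower bound on the mass outside the sweep.} For any $u,v$ and any $k,k'\in K$, the triangle inequality gives $\dist_\ell(u,v)\le d(u)+d(v)+\mathrm{diam}_\ell(K)$. Therefore
\[
  \frac1\phi\le\sum_{u,v}D(u,v)\dist_\ell(u,v)\le \sum_{u,v}\frac{A(u)A(v)}{|A|}\bigl(d(u)+d(v)\bigr)+\frac{|A|}{2}\cdot\frac{1}{4\phi|A|}.
\]
Here $\sum_{u,v}D(u,v)\le|A|/2$, which follows from summing $D$ over unordered pairs. Moreover, $\sum_{u,v}\frac{A(u)A(v)}{|A|}(d(u)+d(v))\le\sum_u A(u)d(u)$. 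Consequently $\sum_u A(u)d(u)\ge \frac{1}{\phi}-\frac{1}{8\phi}=\frac{7}{8\phi}$. By the layer-cake formula,
\[
  \sum_u A(u)d(u)=\int_0^\infty A(V-S_r)\,dr,
\]
so $\int_0^\infty A(V-S_r)\,dr\ge \frac{7}{8\phi}$. In addition, $A(S_r)\ge A(K)\ge |A|/3\ge A(V-S_r)/2$ for every $r$, so $\min\{A(S_r),A(V-S_r)\}\ge \tfrac12 A(V-S_r)$. It follows that
\[
  \int_0^\infty \min\{A(S_r),A(V-S_r)\}\,dr\ge\frac{7}{16\phi}.
\]

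\textbf{Conclusion.} If $|\delta(S_r)|>12\phi\min\{A(S_r),A(V-S_r)\}$ held for all $r$ at which $S_r\neq V$, integrating would give $1\ge\int|\delta(S_r)|\,dr>12\phi\cdot\frac{7}{16\phi}>1$, a contradiction. Note that the integrand vanishes once $S_r=V$, so those $r$ contribute nothing and cause no issue. Hence some threshold $r$ yields the desired $S'=S_r\supseteq K$, and we find it by trying all $n$ thresholds.

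\textbf{Main obstacle.} The delicate part is the constant bookkeeping in the key step. One must account for the unordered-pair convention in $D$ and for the diameter term. If the crude bound $\sum_{u,v}D(u,v)\le|A|/2$ or the factor $2$ from $\min$ turns out to be lossy in a different convention, the constant $12$ still leaves slack, since we only need $12\cdot c>1$ where $c$ is the resulting lower-bound constant.
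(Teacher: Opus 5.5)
Your proof is correct and is essentially the paper's argument. Both use a sweep over distance-to-$K$ threshold cuts, the bound $\int_0^\infty|\delta(S_r)|\,dr\le\sum_e\ell_e\le 1$ on the edge side, and a lower bound on $\sum_u A(u)\dist_\ell(u,K)$ from the triangle inequality through $K$ and its diameter. The one difference is the denominator: you compare $|\delta(S_r)|$ directly with $\min\{A(S_r),A(V-S_r)\}$, using the layer-cake formula and $A(K)\ge|A|/3$. The paper instead uses the Leighton--Rao ratio against the crossing demand $D(S_k,V-S_k)$, which costs an extra factor $3$ through $D(u,K)\ge A(u)/3$, so your version gives a sharper constant.
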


  \paragraph{Algorithm Sweep Cut.}

  Define $\pi(v):=\text{dist}_{\ell}(v,K)$ and sort vertices so that $\pi(v_{1})\ge\pi(v_{2})\ge\dots\ge\pi(v_{n}).$ Let $S_{k}=\{v_{1},\dots,v_{k}\}$ Return
  \[
    S'\gets\arg\min_{1\le k<n}\frac{|\delta(S_{k})|}{D(S_{k},V-S_{k})}
  \]
  where $D(S,T)=\sum_{s\in S,t\in T}D(s,t)$.


  \paragraph{Analysis.}

  Since $D$ respects $A$, we have
  \[
    \frac{|\delta(S_{k})|}{\min\{A(S_{k}),A(V-S_{k})\}}\le\frac{|\delta(S_{k})|}{D(S_{k},V-S_{k})}.
  \]
  So, it suffices to show
  \[
    \min_{1\le k<n}\frac{|\delta(S_{k})|}{D(S_{k},V-S_{k})}\le12\phi.
  \]
  We have
  \begin{align}
    \min_{1\le k<n}\frac{|\delta(S_{k})|}{D(S_{k},V-S_{k})} & =\min_{1\le k<n}\frac{|\delta(S_{k})|\cdot|\pi(v_{k})-\pi(v_{k+1})|}{D(S_{k},V-S_{k})\cdot|\pi(v_{k})-\pi(v_{k+1})|} \nonumber \\
    & \le\frac{\sum_{k<n}|\delta(S_{k})|\cdot|\pi(v_{k})-\pi(v_{k+1})|}{\sum_{k<n}D(S_{k},V-S_{k})\cdot|\pi(v_{k})-\pi(v_{k+1})|} \nonumber \\
    & =\frac{\sum_{(u,v)\in E}|\pi(u)-\pi(v)|}{\sum_{u,v}D(u,v)\cdot|\pi(u)-\pi(v)|}.
    \label{eq:ratio}
  \end{align}
  To see the last equality, for each $(u,v)=(v_{i},v_{j})$, its total contribution to $\sum_{k<n}|\delta(S_{k})|\cdot|\pi(v_{k})-\pi(v_{k+1})|$ is exactly
  \[
    |\pi(v_{i})-\pi(v_{i+1})|+\dots+|\pi(v_{j-1})-\pi(v_{j})|=|\pi(v_{i})-\pi(v_{j})|.
  \]
  For each demand pair $D(v_{i},v_{j})$, its total contribution to $\sum_{k<n}D(S_{k},V-S_{k})\cdot|\pi(v_{k})-\pi(v_{k+1})|$ is exactly
  \[
    D(v_{i},v_{j})|\pi(v_{i})-\pi(v_{i+1})|+\dots+D(v_{i},v_{j})|\pi(v_{j-1})-\pi(v_{j})|=D(v_{i},v_{j})|\pi(v_{i})-\pi(v_{j})|.
  \]

  \paragraph{Numerator is $\le1$.}

  We bound each term in the sum. For each edge $(u,v)$, we have
  \[
    |\pi(u)-\pi(v)|=|\text{dist}_{\ell}(u,K)-\text{dist}_{\ell}(v,K)|\le\text{dist}_{\ell}(u,v)\le\ell_{uv}
  \]
  So
  \[
    \sum_{e=(u,v)\in E}|\pi(u)-\pi(v)|\le\sum_{e\in E}\ell_{e}\le1.
  \]

  \paragraph{Denominator $\ge1/12\phi$.}

  Since $\pi(v)=0$ for all $v\in K$,
  \begin{align}
    \sum_{u,v}D(u,v)\cdot|\pi(u)-\pi(v)| & \ge\sum_{u}\pi(u)D(u,K)\ge\frac{1}{3}\sum_{u}\pi(u)A(u).
    \label{eq:demand-cut-distance}
  \end{align}
  where the last inequality is by the product structure of $D$: $D(u,K)=A(u)\frac{A(K)}{|A|}\ge\frac{1}{3}A(u).$ Next, we have
  \begin{align*}
    1/\phi & \le\sum_{u,v}D(u,v)\text{dist}_{\ell}(u,v)\\
    & \le\sum_{u,v}D(u,v)(\text{dist}_{\ell}(u,K)+\text{diam}_{\ell}(K)+\text{dist}_{\ell}(K,v))\\
    & \le2\sum_{u}\pi(u)A(u)+\sum_{u,v}D(u,v)\cdot\frac{1}{4\phi|A|}  \tag{$\text{diam}_{\ell}(K)\leq \frac{1}{4\phi|A|}, \sum_{v} D(u,v) \leq A(u)$}.
  \end{align*}
  As $\sum_{u,v}D(u,v)\cdot\frac{1}{4\phi|A|}=1/4\phi$, plugging this above and rearranging gives 
  \[
    \frac{1}{4\phi}\le\sum_{u}\pi(u)A(u).
  \]
Together with \eqref{eq:demand-cut-distance} this implies that denominator in \eqref{eq:ratio} is at least $1/12\phi$. Thus we have that  \[\min_{1\le k<n}\frac{|\delta(S_{k})|}{D(S_{k},V-S_{k})}\le12\phi,\] which completes the proof of the heavy-component case.

\section{Open Questions}

\paragraph{All-or-nothing Flow: From $\log^{2}k$ to $\log^{1+o(1)}k$?}

In the \emph{all-or-nothing flow} problem~\cite{chekuri2013all}, we are given an undirected graph
$G=(V,E)$ with unit edge capacities and a set of $k$ demand pairs
$P=\{(s_1,t_1),\dots,(s_k,t_k)\}$.
The goal is to choose a subset $P'\subseteq P$ and, for each $(s_i,t_i)\in P'$, route one unit of
(splittable) flow from $s_i$ to $t_i$ so that the total load on every edge is at most~$1$.
Equivalently, we seek a maximum-cardinality subset of pairs that can be routed simultaneously with
congestion~$1$.

This problem can be viewed as a relaxation of the \emph{edge-disjoint paths} (EDP) problem, where each selected commodity must be routed on a \emph{single path}.
All-or-nothing flow was introduced as a ``nice'' intermediate model that admits polylogarithmic
approximation guarantees, whereas EDP with congestion~$1$ appears substantially harder even in very restricted graph classes \cite{chuzhoy2018almost}.

The current best approximation ratio for all-or-nothing flow in general graphs is
$O(\log^2 k)$~\cite{chekuri2005multicommodity}.
A key bottleneck is the $O(\log^2 k)$ \emph{overhead} incurred by the \emph{well-linked decomposition} for all-or-nothing flow. 
Informally, it repeatedly finds sparse cuts and recurses (in the spirit of expander decomposition),
and the loss can be viewed as the product of
(i) an $O(\log k)$ flow--cut gap (when there are $\Theta(k)$ terminals) and
(ii) an $O(\log k)$ recursion depth.

Our new flow-expander decomposition suggests that one might be able to \emph{telescope} the loss across
levels of recursion and reduce the total overhead to $\log^{1+o(1)}k$. However, the connection is not
black-box: the well-linked decomposition used for all-or-nothing flow interleaves
\emph{concurrent} multicommodity flow computations with \emph{maximum-throughput} multicommodity flow
steps, whereas flow-expander decomposition mostly only needs concurrent-flow computation at each recursive call.

Can we construct the well-linked decomposition of~\cite{chekuri2005multicommodity} with
$\log^{1+o(1)}k$ overhead using our technique? If so, it would immediately imply the following.

\begin{conjecture}\label{conj:aonf}
There is a polynomial-time $(\log^{1+o(1)}k)$-approximation algorithm for the all-or-nothing flow
problem.
\end{conjecture}

\paragraph{Tree sparsifiers.}

\emph{Tree cut/flow sparsifiers} can be viewed as a hierarchical analogue of expander decomposition and
underlie many routing and graph-optimization applications, including oblivious routing~\cite{racke2002minimizing},
online multicut~\cite{alon2006general}, near-linear-time approximate maximum flow~\cite{racke2014computing,peng2016approximate},
almost-linear-time minimum-cost flow~\cite{van2024almost}, and dynamic graph algorithms for connectivity and more~\cite{goranci2021expander}.

For disjoint sets $S,T\subseteq V$, let $\mincut_G(S,T)$ be the value of a minimum cut separating
$S$ from $T$ in $G$.
A \emph{tree cut sparsifier} for $G$ with quality $\gamma$ is a tree $T$ (with edge capacities and
$V(T)\supseteq V$) such that for all disjoint $S,T\subseteq V$,
\[
  \mincut_G(S,T)\ \le\ \mincut_T(S,T)\ \le\ \gamma\cdot \mincut_G(S,T).
\]
A \emph{tree flow sparsifier} for $G$ with quality $\gamma$ is a tree $T$ such that for every
$\deg_G$-respecting demand matrix $D$ (i.e., $\sum_u D(v,u)\le \deg_G(v)$ for all $v$),
\begin{itemize}
\item if $D$ is routable in $G$ with congestion~$1$, then $D$ is routable in $T$ with congestion~$1$, and
\item if $D$ is routable in $T$ with congestion~$1$, then $D$ is routable in $G$ with congestion at most $\gamma$.
\end{itemize}
As with cut vs.\ flow expansion, tree flow sparsifiers are strictly stronger: every tree flow
sparsifier of quality $\gamma$ is also a tree cut sparsifier of quality $\gamma$, whereas a tree cut
sparsifier of quality $\gamma$ only implies a tree flow sparsifier with an additional $O(\log n)$ loss
in general.

The qualitative picture here mirrors expander decomposition.
There is an $\Omega(\log n)$ graph-theoretic lower bound (already for tree cut sparsifiers, e.g.\ on
grids)~\cite{maggs1997exploiting,bartal1997line}.
Existentially, this barrier can be matched up to lower-order factors: tree cut sparsifiers of quality
$O(\log n\log\log n)$ exist for every graph~\cite{racke2014improved}.

However, all known \emph{polynomial-time} constructions lose an additional polylogarithmic factor
beyond the existential bound: the best current algorithms achieve roughly
$O(\log^{1.5} n\log\log n)$ for tree cut sparsifiers~\cite{racke2014improved} and
$O(\log^{2} n\log\log n)$ for tree flow sparsifiers~\cite{harrelson2003polynomial}.
Recent work \cite{racke2014computing,henzinger2025improved,agassy2025improved} focused on faster construction time but did \emph{not} lead to improved quality.

Given that our work essentially removes the analogous extra loss for flow-expander decompositions,
it is natural to ask whether the same is possible for tree flow sparsifiers.

\begin{conjecture}\label{conj:tree-flow}
There is a polynomial-time algorithm that constructs a tree flow sparsifier with quality
$\log^{1+o(1)} n$ for every $n$-vertex undirected graph.
\end{conjecture}

\paragraph{Vertex and directed expander decompositions.}

Our results concern edge-based decompositions in undirected graphs. A very natural next step is to ask whether the
same near-optimal overhead guarantees extend to (i) \emph{vertex}-expander decompositions as defined in \cite{long2022near} and (ii) \emph{directed}-expander decompositions as defined in \cite{bernstein2020deterministic,fleischmann2025improved}.
Can we obtain $\log^{1+o(1)} n$-overhead analogues in these settings as well?

  \bibliographystyle{alpha}
  \bibliography{ref}

\appendix

\section{Proof of \Cref{fact:product-demand}}
\label{sec:omit}

  Let $D'$ be any $A$-respecting demand. Consider routing $D'$ in the following complete graph $G_A$ with capacities given by $D_{A}$ as follows: for each unordered pair $\{x,y\}$ and each intermediate vertex $z\in V$, send an amount $D'(x,y)\cdot \frac{A(z)}{|A|}$ from $x$ to $z$ and an equal amount from $z$ to $y$. This defines a feasible (fractional) two-hop routing.

  Fix an (undirected) edge $\{x,z\}$ in $G_A$. The total flow sent on $\{x,z\}$ is at most
  \[
    \sum_{y\in V\setminus\{x\}} D'(x,y)\cdot\frac{A(z)}{|A|} \le A(x)\cdot\frac{A(z)}{|A|} = D_{A}(x,z),
  \]
  and the same bound holds for the load contributed when $x$ appears as the second endpoint. Thus every edge $\{u,v\}$ is used with congestion at most $2$ compared to its capacity $D_{A}(u,v)$.

  Therefore, $D'$ is routable in $G_A$ with congestion $2$. Composing this routing with a congestion-$1/\phi$ routing of $D_{A}$ in $G$ yields a routing of $D'$ in $G$ with congestion $2/\phi$, proving that $A$ is $(\phi/2)$-flow-expanding in $G$.

  \end{document}